\newcommand{\argmax}{\mathrm{argmax}}
\newcommand{\argmin}{\mathrm{argmin}}
\newcommand{\supp}{\mathrm{supp}}
\newcommand{\bfzero}{\mathbf{0}}
\newcommand{\rme}{\mathrm{e}}
\newcommand{\rmO}{\mathrm{O}}
\newcommand{\bbR}{\mathbb{R}}
\newcommand{\bbZ}{\mathbb{Z}}
\newcommand{\bfA}{\mathbf{A}}
\newcommand{\bfB}{\mathbf{B}}
\newcommand{\bfe}{\mathbf{e}}
\newcommand{\bfw}{\mathbf{w}}
\newcommand{\bfx}{\mathbf{x}}
\newcommand{\bfy}{\mathbf{y}}
\newcommand{\bfz}{\mathbf{z}}
\newcommand{\calF}{\mathcal{F}}
\newcommand{\calI}{\mathcal{I}}
\newcommand{\calM}{\mathcal{M}}
\newcommand{\calN}{\mathcal{N}}
\newcommand{\calP}{\mathcal{P}}
\newtheorem{theorem}{Theorem}[]
\newtheorem{lemma}[theorem]{Lemma}
\newtheorem{corollary}[theorem]{Corollary}
\newtheorem{proposition}[theorem]{Proposition}
\theoremstyle{definition}
\newtheorem{definition}[theorem]{Definition}
\newtheorem{remark}[theorem]{Remark}
\icmltitlerunning{Approximation Guarantees of Local Search Algorithms via Localizability of Set Functions}
\begin{document}

\twocolumn[
\icmltitle{Approximation Guarantees of Local Search Algorithms\\via Localizability of Set Functions}

\begin{icmlauthorlist}
\icmlauthor{Kaito Fujii}{nii}
\end{icmlauthorlist}

\icmlaffiliation{nii}{National Institute of Informatics, Tokyo, Japan}

\icmlcorrespondingauthor{Kaito Fujii}{fujiik@nii.ac.jp}

\icmlkeywords{Machine Learning, ICML}

\vskip 0.3in
]

\printAffiliationsAndNotice{}  

\begin{abstract}
This paper proposes a new framework for providing approximation guarantees of local search algorithms. Local search is a basic algorithm design technique and is widely used for various combinatorial optimization problems. To analyze local search algorithms for set function maximization, we propose a new notion called \textit{localizability} of set functions, which measures how effective local improvement is. Moreover, we provide approximation guarantees of standard local search algorithms under various combinatorial constraints in terms of localizability. The main application of our framework is sparse optimization, for which we show that restricted strong concavity and restricted smoothness of the objective function imply localizability, and further develop accelerated versions of local search algorithms. We conduct experiments in sparse regression and structure learning of graphical models to confirm the practical efficiency of the proposed local search algorithms.
\end{abstract}

\section{Introduction}\label{sec:local-background}
Local search is a widely used technique to design efficient algorithms for optimization problems.
Roughly speaking, local search algorithms start with an initial solution and gradually increase the objective value by repeatedly moving the solution to a nearby point.
While this approach leads to effective heuristics for many optimization problems in practice, it is not always easy to provide approximation guarantees on their performance.

In this paper, we propose a generic framework for providing approximation guarantees of local search algorithms for \textit{set function optimization}.
Set function optimization is a problem of finding an (approximately) optimal set from all feasible sets.
Various machine learning tasks have been formulated as set function optimization problems, such as feature selection \citep{Das2011,Elenberg18}, summarization \citep{LB11,BMKK14}, or active learning \citep{HJZL06,GK11}.

A promising approach to analyze local search algorithms for set function optimization is to utilize \textit{submodularity}.
Submodularity \citep{Fujishige2005} is a property of set functions useful for designing efficient algorithms and has been extensively studied.
Existing studies showed that for maximizing a submodular function subject to a certain constraint, local search procedures yield a constant-factor approximate solution to an optimal solution \citep{NWF78,FNW78,LSV10,FNSW11}.
Local search algorithms have been applied to several machine learning tasks that have submodularity \citep{IJB13,Balkanski2016}, but there are many other practical set functions that deviate from submodularity.

To analyze local search algorithms for these problems, we propose a novel analysis framework that can be applied to local search algorithms for non-submodular functions.
The key notion of our framework is a property of set functions, which we call \textit{localizability}.
Intuitively, localizability is a property that implies any local optimum is a good approximation to a global optimal solution.
By utilizing this property, we show that for maximizing a set function with localizability under a certain constraint, simple local search algorithms achieve a good approximation.

\begin{table*}[t]
	\vskip 0.15in
	\centering
	\caption{
		Comparison of existing bounds on approximation ratios of local search algorithms, greedy algorithms, and modular approximation for sparse optimization with combinatorial constraints.
		The result of \citet{Elenberg2017} is indicated by $\dagger$.
		The result of \citet{CFK18} is indicated by $\ddagger$.
		$M_s$ and $M_{s,t}$ are restricted smoothness constants and $m_s$ is a restricted strong concavity constant (See \Cref{def:restricted} for details).
		$T$ is the number of iterations of local search algorithms and $s$ is the maximum cardinality of feasible solutions.
	}
	\label{table:sparse-ratio}
	\begin{tabular}{lccc}
		\toprule
		Constraint & Local search & Greedy-based & Modular approx.\\
		\midrule
		Cardinality & $\frac{m_{2s}^2}{M_{s,2}^2} \left( 1 - \exp\left(\frac{M_{s,2}T}{sm_{2s}}\right) \right)$ & $1 - \exp\left(- \frac{m_{2s}}{M_{s,1}}\right)$ $\dagger$ & $\frac{m_1 m_{s}}{M_1 M_s}$\\
		\hline
		Matroid & $\frac{m_{2s}^2}{M_{s,2}^2}  \left( 1 - \exp\left(\frac{M_{s,2}T}{sm_{2s}}\right) \right)$ & $\frac{1}{(1 + \frac{M_{s,1}}{m_s})^2}$ $\ddagger$ & $\frac{m_1 m_{s}}{M_1 M_s}$\\
		\hline
		\begin{tabular}{@{}c@{}}$p$-Matroid intersection\\or $p$-Exchange systems\end{tabular} & $\frac{1}{p-1+1/q}\frac{m_{2s}^2}{M_{s,2}^2}  \left( 1 - \exp\left(\frac{(p-1+1/q)M_{s,2}T}{sm_{2s}}\right) \right) $ & N/A & $\frac{1}{p-1+1/q}\frac{m_1 m_{s}}{M_1 M_s}-\epsilon$\\
		\bottomrule
	\end{tabular}
\end{table*}

The main application of our framework is \textit{sparse optimization}.
Sparse optimization is the problem of finding a sparse vector that optimizes a continuous objective function.
It has various applications such as feature selection for sparse regression and structure learning of graphical models.
An approach to sparse optimization is a reduction to a set function optimization problem, which is adopted by \citet{JRD16} and \citet{Elenberg2017}.
We show that localizability of this set function is derived from restricted strong concavity and restricted smoothness of the original objective function, which implies approximation guarantees of local search algorithms.
Furthermore, we devise accelerated variants of our proposed local search algorithms by utilizing the structure of sparse optimization.
An advantage of our approach over existing methods is its applicability to a broader class of combinatorial constraints.

\paragraph{Our contribution.}
In this paper, we propose a new property of set functions called \textit{localizability} and provide a lower bound on the approximation ratio of local search algorithms for maximizing a set function with this property.
Our contribution is summarized as follows.
\begin{itemize}
\item We define localizability of set functions and show that localizability of sparse optimization is derived from restricted strong concavity and restricted smoothness of the original objective function.
\item Under the assumption of localizability, we provide lower bounds on the approximation ratio of a standard local search algorithm under a matroid constraint, $p$-matroid intersection constraint, or $p$-exchange system constraint.
\item For sparse optimization, we propose two accelerated variants of local search algorithms, which we call \textit{semi-oblivious} and \textit{non-oblivious} local search algorithms.
\item We conduct experiments on sparse regression and structure learning of graphical models to confirm the practical efficiency of our accelerated local search algorithms.
\end{itemize}

\subsection{Related Work}

\paragraph{Local search for submodular maximization.}
For monotone submodular maximization, several algorithms have been designed based on local search.
\citet{NWF78} proposed a $1/2$-approximation local search procedure for a cardinality constraint, which they call an \textit{interchange heuristic}.
\citet{FNW78} generalized this result to a single matroid constraint.
For a $p$-matroid intersection constraint, \citet{LSV10} proposed a $(1/p-\epsilon)$-approximation local search algorithm.
\citet{FNSW11} proposed a novel class of constraints called $p$-exchange systems and devised a $(1/p-\epsilon)$-approximation local search algorithm.
\citet{FW14} devised a $(1 - 1/\rme)$-approximation local search algorithm for a matroid constraint.
Also for non-monotone submodular maximization, constant-factor approximation local search algorithms have been devised \citep{FMV11,LMNS09}.
While local search algorithms for submodular maximization have been studied extensively, there are only a few results on those for non-submodular maximization.

\paragraph{Approximation algorithms for non-submodular function maximization and sparse optimization.}
For non-submodular function maximization, many existing studies have adopted an approach based on \textit{greedy algorithms}.
\citet{Das2011} analyzed greedy algorithms for sparse linear regression by introducing the notion of submodularity ratio.
\citet{Elenberg18} extended their results to sparse optimization problems with restricted strong concavity and restricted smoothness.
\citet{CFK18} showed that the random residual greedy algorithm achieves $(\gamma_{} / (1 + \gamma_{}))^2$-approximation for a set function maximization with submodularity ratio $\gamma$ under a single matroid constraint\footnote{They did not specify the subscripts of $\gamma$, but it is not larger than $\min_{i=1,\cdots,s} \gamma_{i-1,s-i}$, where $s$ is the rank of the matroid.}.
We compare our results with existing methods for sparse optimization in \Cref{table:sparse-ratio}.
Note that the results of \citet{Das2011} and \citet{CFK18} hold for any monotone set function whose submodularity ratio is bounded, while we utilize a stronger property derived from restricted strong concavity and restricted smoothness.
\citet{Bian17} analyzed the greedy algorithm for maximizing a set function whose submodularity ratio and generalized curvature are both bounded.
\citet{Sakaue19} considered sparse optimization with a constraint expressed by a monotone set function with bounded superadditivity ratio and restricted inverse curvature, but their framework cannot deal with matroid constraints and $p$-exchange system constraints.
\citet{FS18} developed a similar analysis to ours, but their focus lies in a different problem called dictionary selection.
The Frank-Wolfe algorithm~\citep{FW56,Jaggi13} is a continuous optimization method that is often applied to sparse optimization.
A variant called the \textit{pairwise Frank-Wolfe algorithm}~\citep{LJ15} incorporates the technique of moving weight between two atoms at each iteration, which is similar to our local search procedures, but their guarantees are incomparable to ours.

\paragraph{Modular approximation for sparse optimization.}
For sparse optimization with structured constraints, there exists a trivial benchmark called \textit{modular approximation} \citep{CK11}.
Modular approximation maximizes a linear function that approximates the original set function by ignoring all interactions between elements.
We provide a detailed description and analysis of modular approximation in \Cref{sec:modular-approximation}.

\paragraph{Sparse recovery.}
Many existing studies on sparse optimization focus on sparse recovery guarantees, which cannot be compared directly with our guarantees on approximation ratios.
In the context of compressed sensing, several algorithms similar to our non-oblivious local search algorithms have been developed \citep{NT10,BRB13}.
\citet{KC12} and \citet{BLSGB16} developed frameworks that can be applied to a matroid constraint.
For structure learning of graphical models, \citet{JJR11} provided sparse recovery guarantees for the forward-backward greedy algorithm by assuming restricted strong concavity and restricted smoothness.
Recently, algorithms with recovery guarantees under weaker assumptions have been developed \citep{Bresler15,KM17,WSD19}.

\subsection{Organization}
The rest of this paper is organized as follows.
\Cref{sec:local-setting} specify the problem settings that we tackle in this paper.
\Cref{sec:local-approximate} introduces the notion of localizability and shows localizability of sparse optimization.
In \Cref{sec:local-algorithms}, we propose local search algorithms for a matroid constraint, $p$-matroid intersection constraint, or $p$-exchange system constraint.
In \Cref{sec:local-acceleration}, we devise accelerated local search algorithms for sparse optimization.
In \Cref{sec:local-applications}, we describe applications of our problem settings: sparse regression and structure learning of graphical models.
In \Cref{sec:local-experiments}, we empirically compare our proposed algorithms with existing methods.
Due to space constraints, we defer all proofs to the appendix.

\section{Problem Setting}\label{sec:local-setting}
In this section, we introduce the problem settings that we deal with in this paper.

\paragraph{Set function maximization.}
Let $N \coloneqq [n]$ be the ground set and define a non-negative set function $f \colon 2^N \to \bbR_{\ge 0}$.
Throughout the paper, we assume $f$ is monotone, i.e., $f(X) \le f(Y)$ holds for any $X \subseteq Y \subseteq N$.
We say $f$ is submodular when $f(S \cup \{v\}) - f(S) \ge f(T \cup \{v\}) - f(T)$ for any $S \subseteq T \subseteq N$ and $v \in N \setminus T$.
Let $\calI \subseteq 2^N$ be a set family that represents all feasible solutions.
We assume $(N, \calI)$ is an independence system, that is, $\emptyset \in \calI$ and $X \in \calI$ for any $X \subseteq Y$ such that $Y \in \calI$.
A set function maximization problem can be written as
\begin{equation}\label{eq:set-max}
		\text{Maximize} \quad f(X) \qquad \text{subject to} \quad X \in \calI.
\end{equation}
In general, suppose we have access to a value oracle and independence oracle, which return the value of $f(X)$ and the Boolean value that represents whether $X \in \calI$ or not for any input $X \in N$, respectively.

We consider three classes of independence systems: matroid constraints, $p$-matroid intersection, and $p$-exchange systems, which include structures that appear in applications.
A standard setting of sparse optimization where $\calI = \{ X \subseteq N \mid |X| \le s \}$ is a special case of matroid constraints.
\begin{definition}[{Matroids}]
	An independence system $(N, \calI)$ is called a \textit{matroid} if for any $S, T \in \calI$ with $|S| < |T|$, there exists $v \in T \setminus S$ such that $S \cup \{v\} \in \calI$.
\end{definition}

\begin{definition}[{$p$-Matroid intersection}]
	An independence system $(N, \calI)$ is a \textit{$p$-matroid intersection} if there exist $p$ matroids $(N, \calI_1), \cdots, (N, \calI_p)$ such that $\calI = \bigcap_{i=1}^p \calI_i$.
\end{definition}

\begin{definition}[{$p$-Exchange systems~\citep{FNSW11}}]
	An independence system $(N, \calI)$ is a \textit{$p$-exchange system} if for any $S, T \in \calI$, there exists a map $\varphi \colon (T \setminus S) \to 2^{S \setminus T}$ such that (a) for any $v \in T \setminus S$, it holds that $|\varphi(v)| \le p$, (b) each $v \in S \setminus T$ appears in $(\varphi(v))_{v \in T \setminus S}$ at most $p$ times, and (c) for any $X \subseteq T \setminus S$, it holds that $(S \setminus \bigcup_{v \in X} \varphi(v)) \cup X \in \calI$.
\end{definition}

\paragraph{Sparse optimization.}
Sparse optimization is the problem of finding a sparse solution that maximizes a continuously differentiable function $u \colon \bbR^n \to \bbR$.
Assume we have an access to a zeroth and first-order oracle that returns the value of $u(\bfw)$ and gradient $\nabla u(\bfw)$ given $\bfw \in \bbR^n$.
To define the approximation ratio properly, we need to assume $u(\bfzero) \ge 0$, but we can normalize any function $u' \colon \bbR^n \to \bbR$ by setting $u(\bfw) \coloneqq u'(\bfw) - u'(\bfzero)$.
Let $N = [n]$ be the set of all variables and $\calI \subseteq 2^N$ a family of feasible supports.
We can write a sparse optimization problem with structured constraints as
\begin{equation}
	\text{Maximize} \quad u(\bfw) \qquad \text{subject to} \quad \supp( \bfw ) \in \calI,
\end{equation}
where $\supp(\bfw)$ represents the set of non-zero elements of $\bfw$, that is, $\supp(\bfw) = \{ i \in N \mid \bfw_i \neq 0 \}$.
We define $\| \bfw \|_0 = |\supp(\bfw)|$.

We assume restricted strong concavity and restricted smoothness of the objective function $u$, which are defined as follows.

\begin{definition}[{Restricted strong concavity and restricted smoothness~\citep{NRWY12,JTK14}}]\label{def:restricted}
	Let $\Omega$ be a subset of $\bbR^d \times \bbR^d$ and $u \colon \bbR^d \to \bbR$ be a continuously differentiable function.
	We say that $u$ is \emph{restricted strongly concave} with parameter $m_\Omega$ and \emph{restricted smooth} with parameter $M_\Omega$ on domain $\Omega$ if
	\begin{align*}
			- \frac{m_\Omega}{2} \| \bfy - \bfx \|_2^2
			&\ge u(\bfy) - u(\bfx) - \langle \nabla u (\bfx), \bfy - \bfx \rangle\\
			&\ge - \frac{M_\Omega}{2} \| \bfy - \bfx \|^2_2
	\end{align*}
    for all $(\bfx, \bfy) \in \Omega$.
\end{definition}

Let $\Omega_{s} = \{ (\bfx, \bfy) \in \bbR^n \times \bbR^n \mid \| \bfx \|_0 \le s, \| \bfy \|_0 \le s, \| \bfx - \bfy \|_0 \le s \}$ and $\Omega_{s,t} = \{ (\bfx, \bfy) \in \bbR^n \times \bbR^n \mid \| \bfx \|_0 \le s , ~ \| \bfy \|_0 \le s, ~ \| \bfx - \bfy \|_0 \le t \}$.
Let $m_s$ be restricted strong concavity parameter on $\Omega_s$ and $M_{s,t}$ the restricted smoothness parameter on $\Omega_{s,t}$ for any positive integer $s,t \in \bbZ_{> 0}$.
Due to the restricted strong concavity of $u$, $\argmax_{\supp(\bfw) \subseteq X} u(\bfw)$ is uniquely determined.
We denote this maximizer by $\bfw^{(X)}$.

By introducing a set function $f \colon 2^{N} \to \bbR_{\ge 0}$ defined as
\begin{equation*}
	f(X) = \max_{\supp(\bfw) \subseteq X} u(\bfw),
\end{equation*}
we can regard the sparse optimization problem as a set function optimization problem \eqref{eq:set-max}.

\paragraph{Notations.}
Vectors are denoted by bold lower-case letters (e.g. $\bfx$ and $\bfy$) and matrices are denoted by bold upper-case letters (e.g. $\bfA$ and $\bfB$).
Sets are denoted by upper-case letters (e.g. $X$ and $Y$).
For $X \subseteq N$ and $a \in N$, we define $X + a \coloneqq X \cup \{a\}$ and $X - a \coloneqq X \setminus \{a\}$.
We define the symmetric difference by $X \triangle Y \coloneqq (X \setminus Y) \cup (Y \setminus X)$.

\section{Localizability of Set Functions}\label{sec:local-approximate}
In this section, we define a property of set functions, which we call \textit{localizability}.
Since the general version of the definition of localizability is complicated, we first introduce a simplified definition as a warm-up, and then state the general definition.

Intuitively, localizability measures how much small modifications of a solution increase the objective value.
Localizability is defined as a property that the sum of the increases yielded by small modifications is no less than the increase yielded by a large modification.
\begin{definition}[{Localizability (simplified version)}]\label{def:localizability}
	Let $f \colon 2^N \to \bbR_{\ge 0}$ be a non-negative monotone set function.
	For some $\alpha, \beta \in \bbR_{\ge 0}$, we say $f$ is \textit{$(\alpha, \beta)$-localizable with size $s$} if for arbitrary subsets $X, X^* \subseteq N$ of size $s$ and bijection $\phi \colon X \setminus X^* \to X^* \setminus X$, we have
	\begin{equation*}
		\sum_{x \in X \setminus X^*} \left\{ f(X - x + \phi(x)) - f(X) \right\} \ge \alpha f(X^*) - \beta f(X).
	\end{equation*}
\end{definition}

This property is sufficient to provide an approximation guarantee of local search algorithms for matroid constraints.
However, we need a generalized version of localizability that considers exchanges of multiple elements to deal with more complicated constraints.
\begin{definition}[{Localizability}]\label{def:localizability-general}
	Let $f \colon 2^N \to \bbR_{\ge 0}$ be a non-negative monotone set function.
	For some $\alpha, \beta_1, \beta_2 \in \bbR_{\ge 0}$, we say $f$ is \textit{$(\alpha, \beta_1, \beta_2)$-localizable with size $s$ and exchange size $t$} if for arbitrary subsets $X, X^* \subseteq N$ of size at most $s$ and any collection $\calP$ of subsets of $X \triangle X^*$ such that $|P| \le t$ for each $P \in \calP$, we have
	\begin{equation*}
		\sum_{P \in \calP} \left\{ f(X \triangle P) - f(X) \right\} \ge \alpha k f(X^*) - ( \beta_1 \ell + \beta_2 k) f(X),
	\end{equation*}
	where $k$ and $\ell$ are positive integers such that each element in $X^* \setminus X$ appears at least $k$ times in $\calP$ and each element in $X \setminus X^*$ appears at most $\ell$ times in $\calP$.
\end{definition}

If we consider the case when $k=1$ and $\ell=1$, this definition coincides with the simplified version with $\beta = \beta_1 + \beta_2$.
In existing studies on submodular maximization, \citet{LSV10} and \citet{FNSW11} utilized this property of linear functions and non-negative monotone submodular functions to prove the approximation bounds for local search algorithms.
\begin{proposition}[{Proved in the proof of Lemma 3.1 of \citet{LSV10}}]
	Any linear function is $(1, 1, 0)$-localizable and any non-negative monotone submodular function is $(1, 1, 1)$-localizable with any size and any exchange size.
\end{proposition}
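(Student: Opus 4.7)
The plan is to handle the linear and submodular cases separately; the submodular case carries the main technical weight.

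For a non-negative linear $f(S) = \sum_{v \in S} w_v$ with $w_v \ge 0$, linearity gives $f(X \triangle P) - f(X) = \sum_{v \in P \cap (X^* \setminus X)} w_v - \sum_{v \in P \cap (X \setminus X^*)} w_v$. Summing over $\calP$ and swapping the order of summation, each $w_v$ for $v \in X^* \setminus X$ picks up a coefficient at least $k$ and each $w_v$ for $v \in X \setminus X^*$ a coefficient at most $\ell$, so the sum is at least $k f(X^* \setminus X) - \ell f(X \setminus X^*)$. Since $f$ is linear, $f(X^*) = f(X \cap X^*) + f(X^* \setminus X)$ and $f(X) = f(X \cap X^*) + f(X \setminus X^*)$, so the right-hand side equals $k f(X^*) - \ell f(X)$, which is the $(1,1,0)$-localizability bound.

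For non-negative monotone submodular $f$, I first use the submodular inequality $f(A) + f(B) \ge f(A \cup B) + f(A \cap B)$ to decouple each swap into an ``add'' step and a ``remove'' step: with $A = X$, $B = X \triangle P$, $P^+ = P \cap (X^* \setminus X)$, and $P^- = P \cap (X \setminus X^*)$, we have $A \cup B = X \cup P^+$ and $A \cap B = X \setminus P^-$, so
\begin{equation*}
    f(X \triangle P) - f(X) \ge \bigl(f(X \cup P^+) - f(X)\bigr) - \bigl(f(X) - f(X \setminus P^-)\bigr).
\end{equation*}
Summing over $P$, it then suffices to establish (i) $\sum_P [f(X \cup P^+) - f(X)] \ge k[f(X^*) - f(X)]$ and (ii) $\sum_P [f(X) - f(X \setminus P^-)] \le \ell f(X)$, which together with the displayed inequality yield the $(1,1,1)$ bound $k f(X^*) - (k+\ell) f(X)$.

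I expect the main obstacle to be (i) and (ii): they must convert per-element multiplicity conditions on $\calP$ into aggregate inequalities. My plan is to use the Lovász extension. For (i), the shifted function $g(S) \coloneqq f(X \cup S) - f(X)$ is monotone submodular with $g(\emptyset) = 0$, so its Lovász extension $\hat g$ is convex, monotone, and positively homogeneous. Jensen's inequality applied to $\hat g$ at the mean $\bar y = \frac{1}{|\calP|} \sum_P \bfone_{P^+}$ gives $\sum_P g(P^+) \ge |\calP| \, \hat g(\bar y)$; since $\bar y \ge (k/|\calP|) \bfone_{X^* \setminus X}$ coordinate-wise, monotonicity and positive homogeneity then yield $|\calP| \, \hat g(\bar y) \ge k \, g(X^* \setminus X) \ge k[f(X^*) - f(X)]$, where the last step uses monotonicity of $f$. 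For (ii), the analogous $h(S) \coloneqq f(X) - f(X \setminus S)$ is monotone supermodular, so $\hat h$ is concave; the symmetric Jensen argument together with the upper bound on the multiplicities $c^-_v \le \ell$ yields $\sum_P h(P^-) \le \ell \, h(X \setminus X^*) \le \ell f(X)$.
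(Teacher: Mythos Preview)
Your submodular argument is correct and rather elegant: the split $f(X\triangle P)-f(X)\ge g(P^+)-h(P^-)$ via $f(A)+f(B)\ge f(A\cup B)+f(A\cap B)$, followed by Jensen for the convex (resp.\ concave) Lov\'asz extensions of the submodular $g$ and supermodular $h$, cleanly converts the per-element multiplicity bounds into $\sum_P g(P^+)\ge k\,g(X^*\setminus X)$ and $\sum_P h(P^-)\le \ell\,h(X\setminus X^*)$, after which monotonicity and non-negativity finish the job. The paper itself gives no proof beyond the pointer to Lee--Sviridenko--Vondr\'ak; their argument is phrased via singleton marginals and an uncrossing/level-set step rather than the Lov\'asz extension, but the two routes are essentially equivalent --- your Jensen step is the continuous form of uncrossing.

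The linear case, however, contains a real slip. You correctly obtain $\sum_{P}\{f(X\triangle P)-f(X)\}\ge k\,f(X^*\setminus X)-\ell\,f(X\setminus X^*)$ and then assert this \emph{equals} $k f(X^*)-\ell f(X)$. Substituting $f(X^*\setminus X)=f(X^*)-f(X\cap X^*)$ and $f(X\setminus X^*)=f(X)-f(X\cap X^*)$ actually gives
\[
k\,f(X^*\setminus X)-\ell\,f(X\setminus X^*)=k f(X^*)-\ell f(X)+(\ell-k)\,f(X\cap X^*),
\]
which dominates $k f(X^*)-\ell f(X)$ only when $\ell\ge k$. In fact the $(1,1,0)$ claim fails without that proviso: take $X=\{a,c\}$, $X^*=\{b,c\}$, weights $w_a=w_b=0$, $w_c=1$, $\calP=\{\{b\},\{b\}\}$, $k=2$, $\ell=1$; the left side is $0$ while $kf(X^*)-\ell f(X)=1$. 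Every use of localizability in the paper does have $\ell\ge k$ (the matroid case has $k=\ell=1$; the $p$-matroid/$p$-exchange case with $p\ge 2$ has $k=q\eta$ and $\ell=(pq-q+1)\eta\ge k$), so your bound suffices for all the applications --- but you should display the extra $(\ell-k)f(X\cap X^*)$ term and invoke $\ell\ge k$ rather than claim equality.
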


In the following proposition, we show that the set function derived from sparse optimization satisfies localizability under the restricted strong concavity and restricted smoothness assumption.
\begin{proposition}\label{lem:feature-exchange}
	Suppose $u \colon 2^N \to \bbR$ is a continuously differentiable function with $u(\bfzero) \ge 0$.
	Let $s,t \in \bbZ_{\ge 0}$ be arbitrary integers.
	Assume $u$ is restricted strong concave on $\Omega_{2s}$ and restricted smooth on $\Omega_{s,t}$.
	If $f \colon 2^N \to \bbR$ is a set function defined as $f(X) = \max_{\supp(\bfw) \subseteq X} u(\bfw)$, then $f$ is $\displaystyle \left( \frac{m_{2s}}{M_{s,t}}, \frac{M_{s,t}}{m_{2s}}, 0 \right)$-localizable with size $s$ and exchange size $t$.
\end{proposition}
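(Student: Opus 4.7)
The plan is a two-step restricted-smoothness / restricted-strong-concavity argument. Throughout, let $\bfw^{(X)}$ and $\bfw^{(X^*)}$ be the restricted maximizers appearing in the definition of $f$; by first-order optimality $\nabla u(\bfw^{(X)})_i = 0$ for every $i \in X$, and the analogous identity holds for $\bfw^{(X^*)}$.

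For the first step I would lower bound each $f(X \triangle P) - f(X)$ via restricted smoothness. Decompose $P = Q \sqcup R$ with $Q = P \cap X \subseteq X \setminus X^*$ and $R = P \setminus X \subseteq X^* \setminus X$, and form a trial vector $\bfy$ supported on $X \triangle P$ by keeping $\bfw^{(X)}$ on $X \setminus P$, zeroing the $Q$-coordinates, and leaving the $R$-coordinates free. Restricted smoothness applied to $(\bfw^{(X)}, \bfy) \in \Omega_{s,t}$, combined with the optimality identity killing the $Q$-part of the linear term and a coordinate-wise optimization over $R$, gives
$$f(X \triangle P) - f(X) \ge \frac{1}{2M_{s,t}}\sum_{i \in R}(\nabla u(\bfw^{(X)})_i)^2 - \frac{M_{s,t}}{2}\sum_{i \in Q}(\bfw^{(X)}_i)^2.$$
Summing over $P$ and invoking the multiplicity hypotheses ($\ge k$ copies for elements of $X^* \setminus X$, $\le \ell$ copies for elements of $X \setminus X^*$) yields
$$\sum_{P \in \calP}\bigl(f(X \triangle P) - f(X)\bigr) \ge \frac{k}{2M_{s,t}}\,G - \frac{M_{s,t}\ell}{2}\,W,$$
where $G = \sum_{i \in X^* \setminus X}(\nabla u(\bfw^{(X)})_i)^2$ and $W = \sum_{i \in X \setminus X^*}(\bfw^{(X)}_i)^2$.

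For the second step I would apply restricted strong concavity on $\Omega_{2s}$ to $(\bfw^{(X)}, \bfw^{(X^*)})$ to convert $G$ into a difference of function values. The inner product $\langle \nabla u(\bfw^{(X)}), \bfw^{(X^*)} - \bfw^{(X)}\rangle$ collapses to $\sum_{i \in X^* \setminus X} \nabla u(\bfw^{(X)})_i \bfw^{(X^*)}_i$ by first-order optimality, and Young's inequality with constant $m_{2s}$ together with the lower bound $\|\bfw^{(X^*)} - \bfw^{(X)}\|_2^2 \ge W + \sum_{i \in X^* \setminus X}(\bfw^{(X^*)}_i)^2$ produces
$$G \ge 2m_{2s}\bigl(f(X^*) - f(X)\bigr) + m_{2s}^2\,W.$$
A separate application of strong concavity to $(\bfzero, \bfw^{(X)})$ combined with $u(\bfzero) \ge 0$ yields the auxiliary inequality $W \le \|\bfw^{(X)}\|_2^2 \le 2f(X)/m_{2s}$.

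Substituting the bound on $G$ into the aggregated first-step inequality gives
$$\sum_{P \in \calP}\bigl(f(X \triangle P) - f(X)\bigr) \ge \frac{k m_{2s}}{M_{s,t}}\bigl(f(X^*) - f(X)\bigr) + \frac{k m_{2s}^2 - M_{s,t}^2\ell}{2M_{s,t}}\,W,$$
and using $W \le 2f(X)/m_{2s}$ to absorb the residual $W$-term into the $f(X)$-penalty produces the claimed $(m_{2s}/M_{s,t},\,M_{s,t}/m_{2s},\,0)$-localizability. The main technical obstacle I expect is the careful bookkeeping for the residual coefficient $k m_{2s}^2 - M_{s,t}^2\ell$, whose sign controls whether the upper or the lower estimate on $W$ should be invoked; matching the tight upper bound on $W$ in the correct direction is precisely what produces the factor $M_{s,t}/m_{2s}$ in front of $\ell\,f(X)$ on the right-hand side while killing any $k\,f(X)$ contribution so that $\beta_2 = 0$.
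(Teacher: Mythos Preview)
Your proposal is correct and follows essentially the same route as the paper: your first step is exactly the paper's Lemma~\ref{lem:feature-smooth}, your second step (the strong-concavity bound on $G$ and the auxiliary bound $W\le 2f(X)/m_{2s}$) is exactly Lemma~\ref{lem:feature-concave} applied to $(X,X^*)$ and to $(X,\emptyset)$, and your final substitution is algebraically the same linear combination the paper takes. Your explicit flag about the sign of $k m_{2s}^2 - M_{s,t}^2\ell$ is a point the paper's write-up glosses over (it silently assumes the multiplier $\ell M_{s,t}/m_{2s}-k m_{2s}/M_{s,t}$ is nonnegative), but otherwise the arguments coincide.
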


\section{Local Search Algorithms}\label{sec:local-algorithms}
In this section, we describe our proposed local search algorithms for a matroid constraint, a $p$-matroid intersection constraint, and a $p$-exchange system constraint, and provide approximation ratio bounds in terms of localizability.

\subsection{Algorithms for a Matroid Constraint}\label{sec:local-matroid}
Here, we describe our proposed algorithm for a matroid constraint.
The algorithm starts with an initial solution, which is any base of the given matroid.
The main procedure of the algorithm is to repeatedly improve the solution by replacing an element in the solution with another element.
At each iteration, the algorithm seeks a pair of an element $x \in X$ and another element $x' \in N \setminus X$ that maximizes $f(X - x + x')$ while keeping the feasibility, which requires $\rmO(sn)$ oracle calls.
The detailed description of the algorithms is given in \Cref{alg:matroid-anytime}.

\begin{algorithm}[t]
	\caption{Local search algorithms for a matroid constraint}\label{alg:matroid-anytime}
	\begin{algorithmic}[1]
		\STATE Let $X \gets \emptyset$.
		\STATE Add arbitrary elements to $X$ until $X$ is maximal in $\calI$.
		\FOR{$i = 1,\cdots,T$}
			\STATE Find the pair of $x \in X$ and $x' \in N \setminus X$ such that $\displaystyle (x, x') \in \argmax \{ f(X - x + x') \mid X - x + x' \in \calI \}$
			\IF{$f(X - x + x') - f(X) > 0$}
				\STATE Update the solution $X \gets X - x + x'$.
			\ELSE
				\STATE \textbf{return} $X$.
			\ENDIF
		\ENDFOR
		\STATE \textbf{return} $X$.
	\end{algorithmic}
\end{algorithm}

We can provide an approximation guarantee in terms of localizability of the objective function as follows.
\begin{theorem}\label{thm:matroid-anytime}
	Suppose $\calI$ is the independence set family of a matroid and $s = \max \{ |X| \mid X \in \calI \}$.
	Assume the objective function $f$ is non-negative, monotone, and $(\alpha, \beta_1, \beta_2)$-localizable with size $s$ and exchange size $2$.
	If $X$ is the solution obtained by executing $T$ iterations of \Cref{alg:matroid-anytime} and $X^*$ is an optimal solution, then we have
	\begin{equation*}
		f(X) \ge \frac{\alpha}{\beta_1 + \beta_2} \left( 1 - \exp\left( - \frac{(\beta_1 + \beta_2) T}{s} \right) \right) f(X^*).
	\end{equation*}
	If $X$ is the output returned by \Cref{alg:matroid-anytime} when it stops by finding no pair to improve the solution, then we have
	\begin{equation*}
		f(X) \ge \frac{\alpha}{\beta_1 + \beta_2} f(X^*).
	\end{equation*}
\end{theorem}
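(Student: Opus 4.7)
The plan is to combine the matroid basis exchange theorem with the localizability inequality (for exchange size~$2$) to derive a simple one-step contraction, and then to iterate. Let $\delta \coloneqq \beta_1 + \beta_2$. Throughout, let $X_i$ denote the solution after $i$ iterations of \Cref{alg:matroid-anytime}, so that $|X_i| = s$ for all $i$ (the initial $X_0$ is a basis by construction, and each swap preserves the basis property). By monotonicity of $f$, we may assume $|X^*| = s$ as well, since extending any optimal independent set to a basis can only increase $f(X^*)$.

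First I would invoke Brualdi's basis exchange theorem: there exists a bijection $\phi \colon X_i \setminus X^* \to X^* \setminus X_i$ such that $X_i - x + \phi(x) \in \calI$ for every $x \in X_i \setminus X^*$. Now apply $(\alpha, \beta_1, \beta_2)$-localizability with the collection $\calP = \{ \{x, \phi(x)\} : x \in X_i \setminus X^* \}$, each set of size $2$ (matching exchange size $2$), and with $k = \ell = 1$. Since $X_i \triangle \{x, \phi(x)\} = X_i - x + \phi(x)$, the localizability inequality yields
\begin{equation*}
    \sum_{x \in X_i \setminus X^*} \bigl[ f(X_i - x + \phi(x)) - f(X_i) \bigr] \ge \alpha f(X^*) - \delta f(X_i).
\end{equation*}
Since each summand on the left corresponds to a feasible swap, and \Cref{alg:matroid-anytime} chooses the best feasible swap at iteration $i+1$, we have $f(X_{i+1}) - f(X_i) \ge \max_x [\,f(X_i - x + \phi(x)) - f(X_i)\,]$, which is at least the average. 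Using $|X_i \setminus X^*| \le s$, this produces the key recurrence
\begin{equation*}
    f(X_{i+1}) \ge \Bigl(1 - \frac{\delta}{s}\Bigr) f(X_i) + \frac{\alpha}{s} f(X^*).
\end{equation*}

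Next I would solve this recurrence by tracking the gap $h_i \coloneqq \frac{\alpha}{\delta} f(X^*) - f(X_i)$ to the natural fixed point. A direct computation shows $h_{i+1} \le (1 - \delta/s)\, h_i$, so by induction $h_T \le (1 - \delta/s)^T h_0 \le (1 - \delta/s)^T \cdot \frac{\alpha}{\delta} f(X^*)$, where I use $h_0 \le \frac{\alpha}{\delta} f(X^*)$ since $f$ is non-negative. Rearranging and applying the standard inequality $(1 - x)^T \le \exp(-xT)$ yields the anytime bound
\begin{equation*}
    f(X_T) \ge \frac{\alpha}{\delta} \left( 1 - \exp\!\left( -\frac{\delta T}{s} \right) \right) f(X^*).
\end{equation*}

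For the termination bound, if the algorithm halts at some $X$, then every feasible swap satisfies $f(X - x + x') - f(X) \le 0$; in particular the $\phi$-swaps above all have nonpositive improvement. Summing and using the localizability inequality directly gives $0 \ge \alpha f(X^*) - \delta f(X)$, which rearranges to $f(X) \ge (\alpha/\delta) f(X^*)$. The only subtle point worth double-checking in the writeup is the clean match between the combinatorial hypothesis of \Cref{def:localizability-general} (parameters $k$, $\ell$, and exchange size) and the structure produced by Brualdi's theorem; once that is lined up, the rest is a textbook geometric-decay argument with no real obstacle.
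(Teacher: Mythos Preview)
Your proposal is correct and essentially identical to the paper's proof: both invoke the matroid basis-exchange bijection (the paper cites it as Schrijver's Corollary 39.12a, you call it Brualdi's theorem), feed the resulting collection of $2$-element swaps into the localizability inequality with $k=\ell=1$, average to get the one-step contraction, and then unroll geometrically. The only cosmetic differences are that you make the reduction to $|X^*|=s$ explicit and spell out the gap recurrence $h_{i+1}\le (1-\delta/s)h_i$, whereas the paper states the contraction in words.
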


\subsection{Algorithms for $p$-Matroid Intersection and $p$-Exchange System Constraints}\label{sec:local-intersection}
In this section, we consider two more general constraints, $p$-matroid intersection and $p$-exchange system constraints with $p \ge 2$.
The proposed algorithms for these two constraints can be described as almost the same procedure by using the different definitions of $q$-reachability as follows.

\begin{definition}[{$q$-Reachability for $p$-matroid intersection~\citep{LSV10}}]\label{def:reachability-intersection}
	Let $\calI \subseteq 2^N$ be a $p$-matroid intersection.
	A feasible solution $T \in \calI$ is $q$-reachable from $S \in \calI$ if $|T \setminus S| \le 2q$ and $|S \setminus T| \le 2pq$.
\end{definition}

\begin{definition}[{$q$-Reachability for $p$-exchange systems~\citep{FNSW11}}]\label{def:reachability-system}
	Let $\calI \subseteq 2^N$ be a $p$-exchange system.
	A feasible solution $T \in \calI$ is $q$-reachable from $S \in \calI$ if $|T \setminus S| \le q$ and $|S \setminus T| \le pq - q + 1$.
\end{definition}

We denote by $\calF_q(X)$ the set of all $q$-reachable sets from $X$ that is determined by each definition of $q$-reachability for $p$-matroid intersection or $p$-exchange systems.

First, we must decide parameter $q \in \bbZ_{\ge 1}$ that determines the neighborhood to be searched at each iteration.
When we select larger $q$, we search larger solution space for improvement at each iteration; thus, we can obtain a better bound on its approximation ratio, while the number of oracle calls $n^{\rmO(q)}$ becomes larger as well.
The initial solution of the proposed algorithms is any feasible solution.
Then the algorithms repeatedly replace the solution with a $q$-reachable solution that increases the objective value the most.
The detailed description of this local search algorithm is given in \Cref{alg:system-anytime}.

\begin{algorithm}[t]
	\caption{Local search algorithms for a $p$-matroid intersection or $p$-exchange system constraint ($p \ge 2$)}\label{alg:system-anytime}
	\begin{algorithmic}[1]
		\STATE Let $X \gets \emptyset$.
		\FOR{$i = 1,\cdots,T$}
			\STATE Find $\displaystyle X' \in \argmax_{X' \in \calF_q(X)} f(X')$.
			\IF{$f(X') - f(X) > 0$}
				\STATE Update the solution $X \gets X'$.
			\ELSE
				\STATE \textbf{return} $X$.
			\ENDIF
		\ENDFOR
		\STATE \textbf{return} $X$.
	\end{algorithmic}
\end{algorithm}

We can provide an approximation ratio bound under the assumption of localizability of the objective function as follows.
\begin{theorem}\label{thm:system-anytime}
	Suppose $\calI$ is the independence set family of a $p$-matroid intersection or $p$-exchange system and $s = \max \{ |X| \mid X \in \calI \}$.
	Let $t = 2p(q+1)$ for the $p$-matroid intersection case and $t = pq+1$ for the $p$-exchange system case.
	Assume the objective function $f$ is non-negative, monotone, and $(\alpha, \beta_1, \beta_2)$-localizable with size $s$ and exchange size $t$.
	If $X$ is the output obtained by executing $T$ iterations of \Cref{alg:system-anytime} with parameter $q$ and $X^*$ is an optimal solution, then the approximation ratio is lower-bounded by
	\begin{equation*}
		\frac{\alpha \left( 1 - \exp\left( \frac{(\beta_1 (p - 1 + 1/q) + \beta_2 )  T}{s}\right) \right)}{\beta_1 (p - 1 + 1/q) + \beta_2}.
	\end{equation*}
	If $X$ is the output returned by \Cref{alg:system-anytime} when it stops by finding no better $q$-reachable solution, then we have
	\begin{equation*}
		f(X) \ge \frac{\alpha}{\beta_1 (p - 1 + 1/q) + \beta_2} f(X^*).
	\end{equation*}
\end{theorem}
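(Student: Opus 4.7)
The plan is to reduce the theorem to a single combinatorial construction and then run the algebra of localizability through it. For the current iterate $X \in \calI$ and an optimum $X^* \in \calI$, I would build a family $\calP$ of subsets of $X \triangle X^*$ satisfying: (a) every $P \in \calP$ has $|P| \le t$ and $X \triangle P \in \calF_q(X)$; (b) each $v \in X^* \setminus X$ appears in exactly $k$ members of $\calP$; (c) each $v \in X \setminus X^*$ appears in at most $\ell$ members, with $\ell/k \le p - 1 + 1/q$; and (d) each $P$ contains at least one element of $X^* \setminus X$, so $|\calP| \le k |X^* \setminus X| \le ks$. For $p$-matroid intersection I would invoke the LSV10-style exchange construction, applying the matroid exchange axiom inside each of the $p$ matroids to cover $X^* \setminus X$ by chunks of size $q$ together with their $q$-reachable counterparts in $X \setminus X^*$. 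For $p$-exchange systems, the map $\varphi$ built into the definition already supplies the needed covering after a simple chunking step.

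Once $\calP$ is in hand, applying the localizability of $f$ to $\calP$ gives
\[
\sum_{P \in \calP} \left[ f(X \triangle P) - f(X) \right] \ge \alpha k f(X^*) - (\beta_1 \ell + \beta_2 k) f(X).
\]
For the stopping bound, local optimality forces each summand on the left to be non-positive, so rearranging yields $f(X) \ge \frac{\alpha}{\beta_1 (\ell/k) + \beta_2} f(X^*) \ge \frac{\alpha}{\beta_1 (p-1+1/q) + \beta_2} f(X^*)$. For the iterative bound, since $X_{i+1}$ is chosen to maximize $f$ over $\calF_q(X_i)$, each summand is at most $f(X_{i+1}) - f(X_i)$; dividing by $|\calP| \le ks$ and using $\ell/k \le p-1+1/q$ gives
\[
f(X_{i+1}) - f(X_i) \ge \frac{\alpha}{s} f(X^*) - \frac{\beta_1 (p-1+1/q) + \beta_2}{s} f(X_i).
\]
Writing $\delta = \beta_1(p-1+1/q) + \beta_2$, this is a linear recursion in $f(X_i)$; starting from $f(X_0) = 0$ and using $1 - \delta/s \le \exp(-\delta/s)$ produces the claimed bound $f(X_T) \ge \frac{\alpha}{\delta} \left( 1 - \exp(-\delta T/s) \right) f(X^*)$.

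The hard part is the construction of $\calP$ with property (c), i.e., controlling the load ratio $\ell/k$ by $p - 1 + 1/q$, since this is exactly where the factor $p - 1 + 1/q$ in the theorem statement is generated. For $p$-matroid intersection one must pay for each of the $p$ matroids and for the boundary elements of the chunks separately, giving a per-demand load of $(p-1) + 1/q$; verifying feasibility of $X \triangle P$ in every matroid simultaneously is the delicate step. For $p$-exchange systems, the load bound comes almost directly from clause (b) of \Cref{def:reachability-system} (each $v \in S \setminus T$ is mapped to at most $p$ times), and the $1/q$ term accounts for the rounding incurred when grouping $X^* \setminus X$ into chunks of size $q$. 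Once these combinatorial loads are established, the rest of the proof is a straightforward amortization identical in spirit to \Cref{thm:matroid-anytime}.
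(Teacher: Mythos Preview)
Your proposal is correct and follows essentially the same route as the paper: the paper invokes the exchange constructions of \citet{LSV10} and \citet{FNSW11} (stated as \Cref{lem:multiset-intersection} and \Cref{lem:multiset-exchange}) to obtain a multiset $\calP$ with $k = q\eta$ and $\ell = (pq - q + 1)\eta$, then applies localizability and the same averaging/recursion argument you sketch. One small slip: the clause you cite as ``clause (b) of \Cref{def:reachability-system}'' is actually clause (b) of the $p$-exchange system definition, not of $q$-reachability.
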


\section{Acceleration for Sparse Optimization}\label{sec:local-acceleration}
We consider two accelerated variants of the proposed local search algorithms in the case of sparse optimization.
To distinguish the original one from the accelerated variants, we call \Cref{alg:matroid-anytime} and \Cref{alg:system-anytime} \textit{oblivious local search algorithms}.

\subsection{Acceleration for a Matroid Constraint}

The oblivious version computes the value of $f(X - x + x')$ for $\rmO(sn)$ pairs of $(x, x')$ at each iteration.
We can reduce the computational cost by utilizing the structure of sparse optimization.

The first variant is the \textit{semi-oblivious} local search algorithm.
For each element $x' \in N \setminus X$ to be added, it computes the value of $f(X - x + x')$ only for $x \in X$ with the smallest $(\bfw^{(X)})^2_x$ among those satisfying $X - x + x' \in \calI$.
Thus, we can reduce the number of times we compute the value of $f(X - x + x')$ from $\rmO(sn)$ to $\rmO(n)$.

The second variant is the \textit{non-oblivious} local search algorithm.
It uses the value of
\begin{equation*}
	\frac{1}{2M_{s,2}} \left( \nabla u(\bfw^{(X)}) \right)^2_{x'} - \frac{M_{s,2}}{2} \left(\bfw^{(X)}\right)_{x}^2
\end{equation*}
in place of the increase of the objective function $f(X - x + x') - f(X)$.
We need to evaluate $\nabla u (\bfw^{(X)})$ and $\bfw^{(X)}$ at the beginning of each iteration, but it is not necessary to compute the value of $f(X - x + x')$.

The detailed description of these algorithms are given in \Cref{alg:matroid-anytime-full} in \Cref{sec:full-pseudocodes}.

\begin{theorem}\label{thm:matroid-sparse}
	Suppose $f(X) = \max_{\supp(\bfw) \subseteq X} u(\bfw)$ and $\calI$ is the independence set family of a matroid.
	If $X$ is the solution obtained by executing $T$ iterations of the semi-oblivious or non-oblivious local search algorithms and $X^*$ is an optimal solution, then we have
	\begin{equation*}
		f(X) \ge \frac{m_{2s}^2}{M_{s,2}^2} \left( 1 - \exp\left( - \frac{M_{s,2} T}{s m_{2s}} \right) \right) f(X^*),
	\end{equation*}
	where $s = \max \{|X| \colon X \in \calI \}$.
	If $X$ is the output returned when the algorithm stops by finding no pair to improve the solution, then we have
	\begin{equation*}
		f(X) \ge \frac{m_{2s}^2}{M_{s,2}^2} f(X^*).
	\end{equation*}
\end{theorem}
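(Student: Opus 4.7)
The plan is to reduce this theorem to \Cref{thm:matroid-anytime} via \Cref{lem:feature-exchange}, while verifying that the accelerated surrogates retain enough of the localizability structure to drive the same per-iteration progress inequality. Concretely, \Cref{lem:feature-exchange} tells us that $f(X) = \max_{\supp(\bfw) \subseteq X} u(\bfw)$ is $(m_{2s}/M_{s,2},\, M_{s,2}/m_{2s},\, 0)$-localizable with size $s$ and exchange size $2$, and substituting $\alpha = m_{2s}/M_{s,2}$, $\beta_1 = M_{s,2}/m_{2s}$, $\beta_2 = 0$ into the bound of \Cref{thm:matroid-anytime} already yields the two claimed inequalities. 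So the only real content is showing that each iteration of the semi-oblivious and non-oblivious variants improves the solution by at least as much as the oblivious variant's analysis requires.

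First I would establish the key surrogate inequality: for any $X \in \calI$, any $x \in X$, and any $x' \in N \setminus X$ with $X - x + x' \in \calI$,
\begin{equation*}
	f(X - x + x') - f(X) \;\ge\; \frac{1}{2M_{s,2}}\bigl(\nabla u(\bfw^{(X)})\bigr)_{x'}^{2} - \frac{M_{s,2}}{2}\bigl(\bfw^{(X)}\bigr)_x^{2}.
\end{equation*}
This follows from restricted smoothness on $\Omega_{s,2}$: the right-hand side is the value at the optimal step size $t^\star = (\nabla u(\bfw^{(X)}))_{x'}/M_{s,2}$ of the quadratic lower bound $t(\nabla u(\bfw^{(X)}))_{x'} - \frac{M_{s,2}}{2}t^2 - \frac{M_{s,2}}{2}(\bfw^{(X)})_x^{2}$, using the first-order optimality $(\nabla u(\bfw^{(X)}))_x = 0$ for $x \in X$ (since $\bfw^{(X)}$ maximizes $u$ over $\supp(\bfw) \subseteq X$) and the fact that $\bfw^{(X)} - (\bfw^{(X)})_x \bfe_x + t^\star \bfe_{x'}$ is supported on $X - x + x'$, hence a feasible competitor to $\bfw^{(X-x+x')}$.

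Next, inspecting the proof of \Cref{lem:feature-exchange}, the same quadratic surrogate is precisely what certifies the localizability inequality: for a suitable bijection $\phi\colon X \setminus X^* \to X^* \setminus X$ arising from the matroid base exchange property, one has
\begin{equation*}
	\sum_{x \in X \setminus X^*} \Bigl[ \tfrac{1}{2M_{s,2}} \bigl(\nabla u(\bfw^{(X)})\bigr)_{\phi(x)}^{2} - \tfrac{M_{s,2}}{2} \bigl(\bfw^{(X)}\bigr)_x^{2} \Bigr] \;\ge\; \alpha f(X^*) - (\beta_1+\beta_2) f(X).
\end{equation*}
For the non-oblivious algorithm, since it maximizes this very surrogate over all feasible $(x,x')$, the maximum is at least $1/s$ times the above sum, and the surrogate inequality turns this into a lower bound on the true gain $f(X-x+x') - f(X)$; that is exactly the progress inequality used in the proof of \Cref{thm:matroid-anytime}. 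For the semi-oblivious algorithm, fix the bijection $\phi$; for each $x^* \in X^* \setminus X$ the algorithm considers $x^\star(x^*) \in X$ minimizing $(\bfw^{(X)})_x^2$ over $\{x \in X : X-x+x^* \in \calI\}$, a set that contains $\phi^{-1}(x^*)$, so $(\bfw^{(X)})_{x^\star(x^*)}^2 \le (\bfw^{(X)})_{\phi^{-1}(x^*)}^2$ and hence the surrogate at $(x^\star(x^*), x^*)$ dominates the surrogate at $(\phi^{-1}(x^*), x^*)$; the true gain $f(X - x^\star(x^*) + x^*) - f(X)$ again dominates this by the surrogate inequality, and the same averaging argument gives the required per-iteration progress.

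With the per-iteration progress inequality in hand, the rest of the proof mirrors \Cref{thm:matroid-anytime} verbatim: a standard geometric-recurrence argument on $f(X^*) - f(X_i)$ delivers both the $T$-iteration bound and the stationary bound. The main obstacle I anticipate is justifying the existence of the base-exchange bijection $\phi$ in the semi-oblivious analysis and ensuring $\phi^{-1}(x^*)$ actually lies in the feasible set over which the minimum is taken; this is handled by the symmetric matroid base exchange lemma, which guarantees a bijection $\phi$ such that both $X - \phi^{-1}(x^*) + x^* \in \calI$ for every $x^*$, so the semi-oblivious feasibility filter never removes $\phi^{-1}(x^*)$ from consideration.
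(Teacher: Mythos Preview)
Your proposal is correct and follows essentially the same approach as the paper: establish the surrogate lower bound $f(X-x+x')-f(X)\ge \tfrac{1}{2M_{s,2}}(\nabla u(\bfw^{(X)}))_{x'}^{2}-\tfrac{M_{s,2}}{2}(\bfw^{(X)})_{x}^{2}$ (this is the paper's \Cref{lem:feature-smooth}), then for each variant compare the selected pair against the bijection $\phi$ from the matroid base-exchange lemma, average, and apply the concavity-side inequality (the paper's \Cref{lem:feature-concave}) to obtain the per-iteration progress $f(X-x+x')-f(X)\ge \tfrac{1}{s}\bigl(\tfrac{m_{2s}}{M_{s,2}}f(X^*)-\tfrac{M_{s,2}}{m_{2s}}f(X)\bigr)$, after which the geometric recurrence is identical to \Cref{thm:matroid-anytime}. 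Your handling of the semi-oblivious case---observing that $\phi^{-1}(x^*)$ is a feasible candidate for the algorithm's minimization of $(\bfw^{(X)})_x^2$, so the surrogate at the algorithm's choice dominates the surrogate at the bijection pair---is exactly the paper's argument.
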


\subsection{Acceleration for $p$-Matroid Intersection and $p$-Exchange System Constraints}

Similarly to the case of matroid constraints, we can develop the semi-oblivious and non-oblivious local search algorithms for $p$-matroid intersection and $p$-exchange system constraints.
The semi-oblivious variant only checks $X' \in \calF_q(X)$ that minimizes $\left\| \left( \bfw^{(X)} \right)_{X \setminus X'} \right\|^2$ among $X'' \in \calF_q(X)$ such that $X'' \setminus X = X' \setminus X$.
The non-oblivious version selects the solution $X' \in \calF_q(X)$ that maximizes
\begin{equation*}
	\frac{1}{2M_{s,t}} \left\| \left( \nabla u(\bfw^{(X)}) \right)_{X' \setminus X} \right\|^2 - \frac{M_{s,t}}{2} \left\| \left(\bfw^{(X)}\right)_{X \setminus X'} \right\|^2.
\end{equation*}
The detailed description of these algorithms are given in \Cref{alg:system-anytime-full} in \Cref{sec:full-pseudocodes}.
While the oblivious local search algorithm requires $O(n^q)$ times of the evaluation of $f$ for finding the most suitable exchange at each iteration in general, the non-oblivious local search reduces it to a linear function maximization problem.
In several cases such as a partition matroid constraint or a matching constraint, we can find the most suitable exchange in time polynomial in $n$ and $q$ by using standard techniques of combinatorial optimization.
We can provide the same approximation guarantees for these accelerated variants as the oblivious variant.

\begin{theorem}\label{thm:system-sparse}
	Suppose $f(X) = \max_{\supp(\bfw) \subseteq X} u(\bfw)$ and $\calI$ is the independence set family of a $p$-matroid intersection or $p$-exchange system.
	Let $t = 2p(q+1)$ for the $p$-matroid intersection case and $t = pq+1$ for the $p$-exchange system case.
	If $X$ is the output obtained by executing $T$ iterations of the semi-oblivious or non-oblivious local search algorithms with parameter $q$ and $X^*$ is an optimal solution, then its approximation ratio is lower-bounded by
	\begin{equation*}
		\frac{1}{p - 1 + 1/q} \frac{m_{2s}^2}{M_{s,t}^2} \left( 1 - \exp\left( - \frac{(p - 1 + 1/q) M_{s,t} T}{s m_{2s}} \right) \right),
	\end{equation*}
	where $s = \max \{|X| \colon X \in \calI \}$.
	If $X$ is the output returned when the algorithm stops by finding no better $q$-reachable solution, then we have
	\begin{equation*}
		f(X) \ge \frac{1}{p - 1 + 1/q} \frac{m_{2s}^2}{M_{s,t}^2} f(X^*).
	\end{equation*}
\end{theorem}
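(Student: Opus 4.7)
}
The plan is to lift the analysis of \Cref{thm:system-anytime} to the accelerated variants by constructing a surrogate progress measure that (i) lower-bounds the true gain $f(X') - f(X)$ on every $q$-reachable $X'$, and (ii) satisfies a localizability-style inequality with the same parameters as those given by \Cref{lem:feature-exchange}, namely $\alpha = m_{2s}/M_{s,t}$, $\beta_1 = M_{s,t}/m_{2s}$, and $\beta_2 = 0$. For the non-oblivious version the natural surrogate is
\begin{equation*}
    g(X',X) \coloneqq \frac{1}{2M_{s,t}} \bigl\| (\nabla u(\bfw^{(X)}))_{X' \setminus X} \bigr\|^2 - \frac{M_{s,t}}{2} \bigl\| \bfw^{(X)}_{X \setminus X'} \bigr\|^2,
\end{equation*}
which is precisely the criterion the algorithm maximizes over $X' \in \calF_q(X)$.

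First I would prove the lower-bound lemma: for every $X' \in \calF_q(X)$, $f(X') - f(X) \ge g(X',X)$. The proof follows the same template as \Cref{lem:feature-exchange}: start from $\bfw^{(X)}$, form a candidate vector supported on $X'$ by zeroing out the coordinates $X \setminus X'$ and then taking a single smoothness-guided gradient step in the coordinates $X' \setminus X$. Restricted smoothness on $\Omega_{s,t}$ upper-bounds the loss incurred by the first move while giving the usual $\frac{1}{2M_{s,t}}\|\cdot\|^2$ gain for the second; since $f(X') \ge u$ evaluated at this candidate, we obtain the claimed inequality. In particular, whenever the non-oblivious algorithm advances (i.e.\ $g > 0$) the true objective also strictly increases, and if the algorithm stops then $g(X',X) \le 0$ for all $X' \in \calF_q(X)$.

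Next, I would establish a surrogate version of localizability: for any $X^*$ of size at most $s$ and the same exchange family $\calP \subseteq 2^{X \triangle X^*}$ (with exchange size $t$) used inside the proof of \Cref{thm:system-anytime},
\begin{equation*}
    \sum_{P \in \calP} g(X \triangle P, X) \ge \alpha k f(X^*) - \beta_1 \ell f(X),
\end{equation*}
where $k,\ell$ are the multiplicities from \Cref{def:localizability-general}. This is essentially the content of \Cref{lem:feature-exchange} re-read through the surrogate: the proof of that proposition bounds each $f(X \triangle P) - f(X)$ using exactly quantities of the form appearing in $g$, so aggregating those bounds (rather than the weakened $f$-gains) gives the surrogate localizability with the same $(\alpha,\beta_1,\beta_2)$ triple. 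Since the non-oblivious algorithm selects the $X'$ maximizing $g$, the per-iteration gain is at least $1/|\calP|$ times the right-hand side above; combined with the lower-bound lemma this reproduces the recursion driving \Cref{thm:system-anytime}, and the stated bound follows from the standard $(1-\exp(-\cdot))$ calculation with the substitution $\alpha = m_{2s}/M_{s,t}$, $\beta_1 = M_{s,t}/m_{2s}$, $\beta_2 = 0$.

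For the semi-oblivious variant the true gain is still used, but the candidate $X \setminus X'$ is forced to be the subset of $X$ with smallest squared weights among those producing the same $X' \setminus X$. The point is that the exchanges constructed in the proof of \Cref{lem:feature-exchange} only need \emph{some} valid $X \setminus X'$ of the right combinatorial shape, and the weight-minimizing choice can only decrease the penalty term $\|\bfw^{(X)}_{X \setminus X'}\|^2$; hence the same surrogate inequality holds \emph{a fortiori} for the gains achieved by the semi-oblivious rule. The main technical hurdle I anticipate is verifying that the exchange family $\calP$ built for $p$-matroid intersection and $p$-exchange systems (at exchange sizes $t = 2p(q+1)$ and $t = pq+1$ respectively) remains compatible with the weight-minimizing restriction of the semi-oblivious rule; once this is checked, the recursion and the final bound are identical to those in \Cref{thm:system-anytime}.
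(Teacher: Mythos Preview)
Your plan is correct and matches the paper's proof essentially step for step: the surrogate $g(X',X)$ is exactly the quantity the paper bounds below $f(X')-f(X)$ via restricted smoothness (your ``lower-bound lemma'' is the paper's \Cref{lem:feature-smooth}), the aggregation over the multiset $\calP$ followed by \Cref{lem:feature-concave} gives the surrogate localizability, and the semi-oblivious case is handled precisely by the weight-minimality argument you describe. The ``technical hurdle'' you flag is not actually a hurdle: since each $X \triangle P \in \calF_q(X)$, the set $P \cap X$ is automatically a valid removal for the addition $P \setminus X$, so the semi-oblivious choice $\phi_X(P \setminus X)$ has no larger squared weight than $P \cap X$, and the inequality chain goes through exactly as you anticipate.
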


\begin{remark}
	We also develop another version of our local search algorithms that increases the objective value at a predetermined rate, which is described in \Cref{sec:local-geometric}.
\end{remark}
\begin{remark}
	The parameter $M_{s,t}$ used in the non-oblivious variant can be replaced with an upper bound on $M_{s,t}$, which leads to the approximation ratio bounds whose $M_{s,t}$ is also replaced with the upper bound.
\end{remark}

\section{Applications}\label{sec:local-applications}
In this section, we provide two applications of our framework: feature selection for sparse regression and structure learning of graphical models.

\subsection{Feature Selection for Sparse Regression}
In sparse regression, given a design matrix $\bfA \in \bbR^{n \times d}$ and a response vector $\bfy$, we aim to find a sparse vector $\bfw \in \bbR^n$ that optimizes some criterion.
We can formulate this problem as a sparse optimization problem of maximizing $u(\bfw)$ subject to $|\supp(\bfw)| \le s$, where $u \colon \bbR^{n} \to \bbR$ is the criterion determined by $\bfA$ and $\bfy$.
\citet{Das2011} devised approximation algorithms in the case where $u$ is the squared multiple correlation $R^2$, i.e., $u(\bfw) \coloneqq 1 - \| \bfy - \bfA \bfw \|_2^2 / \| \bfy \|_2^2$, and \citet{Elenberg18} extended their results to general objectives with restricted strong concavity and restricted smoothness.

Here we consider sparse regression with \textit{structured constraints}.
In practical scenarios, we often have prior knowledge of relationships among features and can improve the quality of the estimation by incorporating it into structured constraints \citep{BCDH10,Huang2009}.
We formulate sparse regression with structured constraints as the problem of maximizing $u(\bfw)$ subject to $\supp(\bfw) \in \calI$, where $\calI$ is the set family of feasible supports.

An advantage of our local search framework is its applicability to a broad class of structured constraints, including matroid constraints.
For example, the following constraint is a special case of matroid constraints.
Suppose the set of features are partitioned into several categories.
Due to a balance among categories, it is often the case that we should select almost the equal number of features from each category.
Such a constraint can be expressed by using a partition matroid.
Partition matroid constraints were used for multi-level subsampling by \citet{BLSGB16} and detecting splice sites in precursor messenger RNAs by \citet{CFK18}.
If there are multiple matroid constraints, we can formulate them as a $p$-matroid intersection constraint.
To our knowledge, our proposed algorithms are the first to cope with multiple matroid constraints.

\subsection{Structure Learning of Graphical Models}
Undirected graphical models, or Markov random fields, express the conditional dependence relationships among random variables.
We consider the problem of estimating the graph structure of an undirected graphical model given samples generated from this probability distribution.
The goal of this problem is to restore the set of edges, that is, the set of all conditionally dependent pairs.
To obtain a more interpretable graphical model, we often impose a sparsity constraint on the set of edges.
This task can be formulated as a sparse optimization problem.

While most existing methods solve the neighborhood estimation problem separately for each vertex under a sparsity constraint \citep{JJR11,KM17}, our framework provides an optimization method that handles the sparsity constraints for all vertices simultaneously.
Suppose we aim to maximize some likelihood function (e.g., pseudo-log-likelihood \citep{Besag75}) under the sparsity constraint on each vertex, i.e., the degree of each vertex is at most $b$, where $b \in \bbZ_{\ge 0}$ is the maximum degree.
This degree constraint is called a $b$-matching constraint, which is a special case of $2$-exchange system.
Hence, we can apply our local search algorithms to this problem.

\section{Experiments}\label{sec:local-experiments}
In this section, we conduct experiments on two applications: sparse regression and structure learning of graphical models.
All the algorithms are implemented in Python 3.6.
We conduct the experiments in a machine with Intel Xeon E3-1225 V2 (3.20 GHz and 4 cores) and 16 GB RAM.

\begin{figure*}[t]
\centering
\subfigure[regression, time, $n = 200$]{
	\includegraphics[width=0.3\textwidth]{./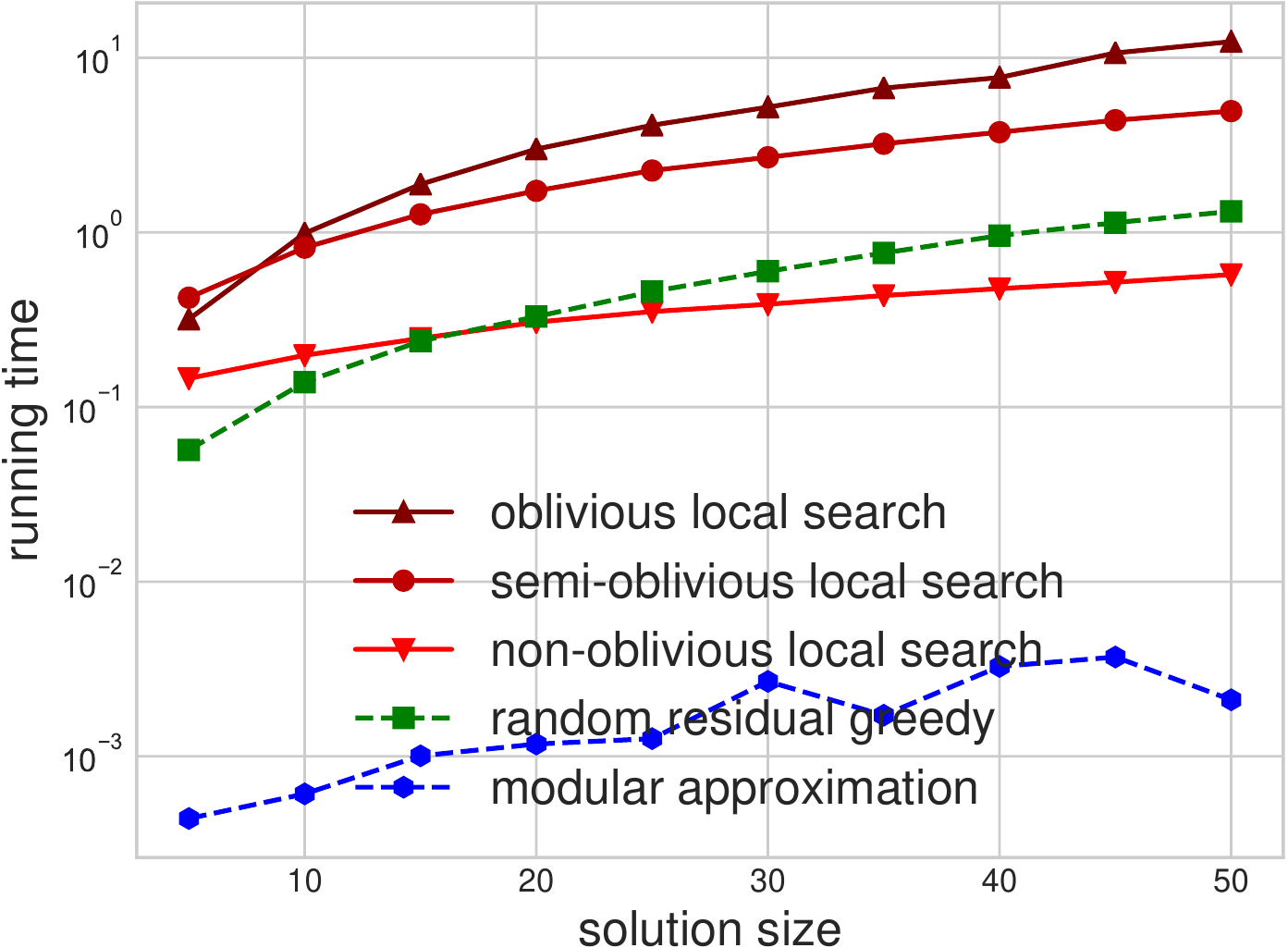}\label{fig:regression_n200_time}
}
\subfigure[regression, objective, $n = 200$]{
	\includegraphics[width=0.3\textwidth]{./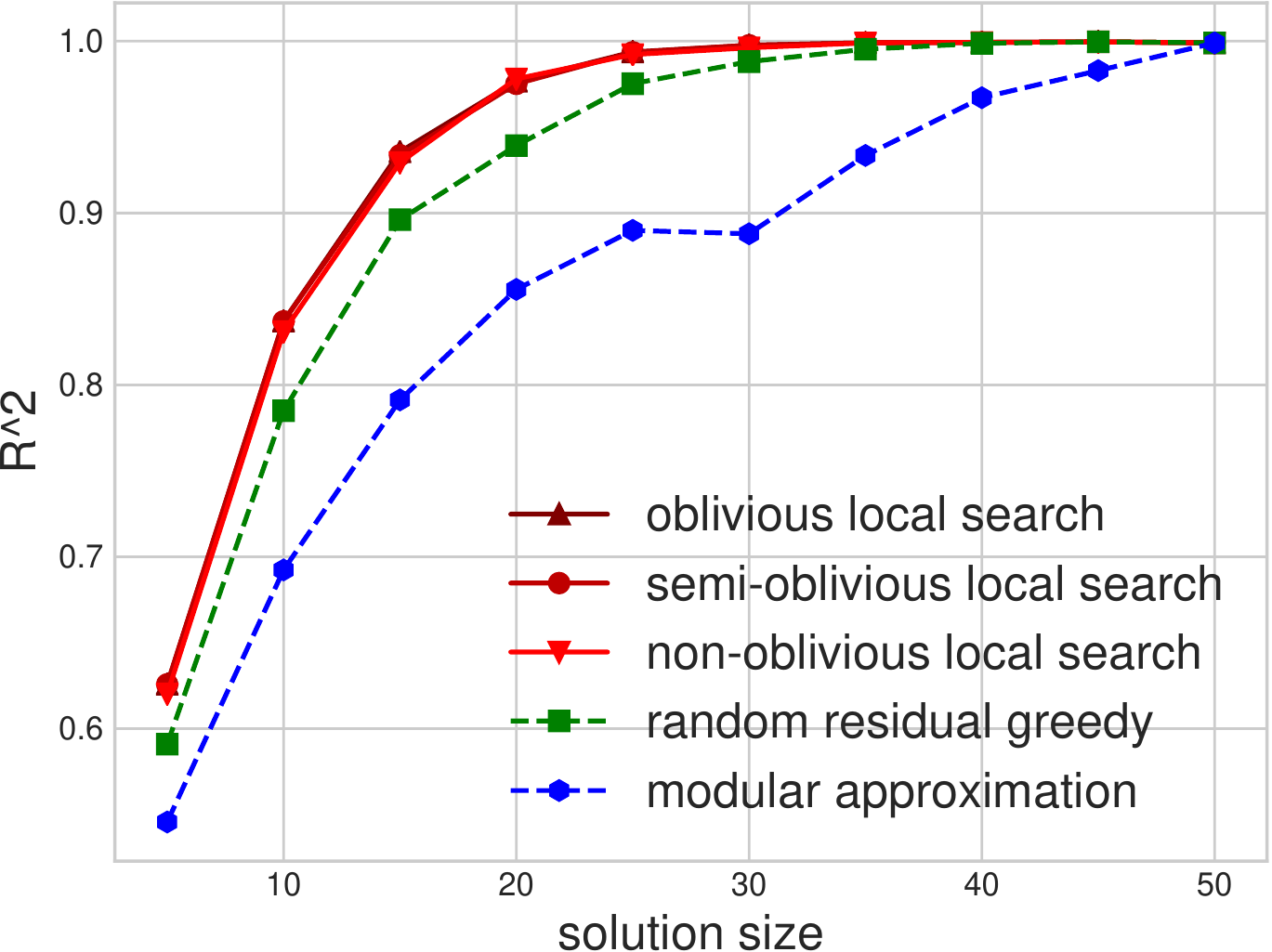}\label{fig:regression_n200}
}
\subfigure[regression, objective, $n = 1000$]{
	\includegraphics[width=0.3\textwidth]{./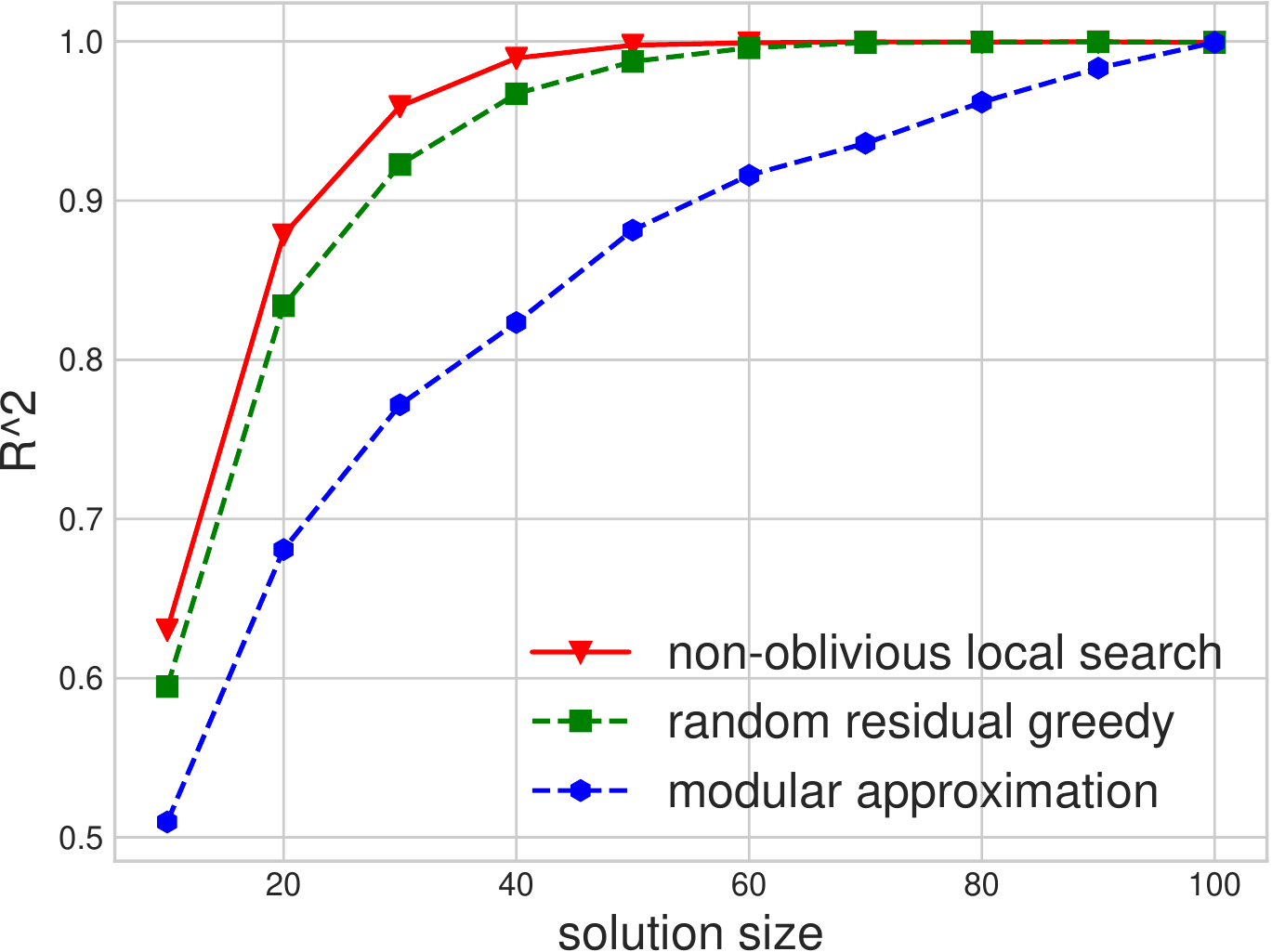}\label{fig:regression_n1000}
}
\subfigure[graphical, time, $|V| = 10$]{
	\includegraphics[width=0.3\textwidth]{./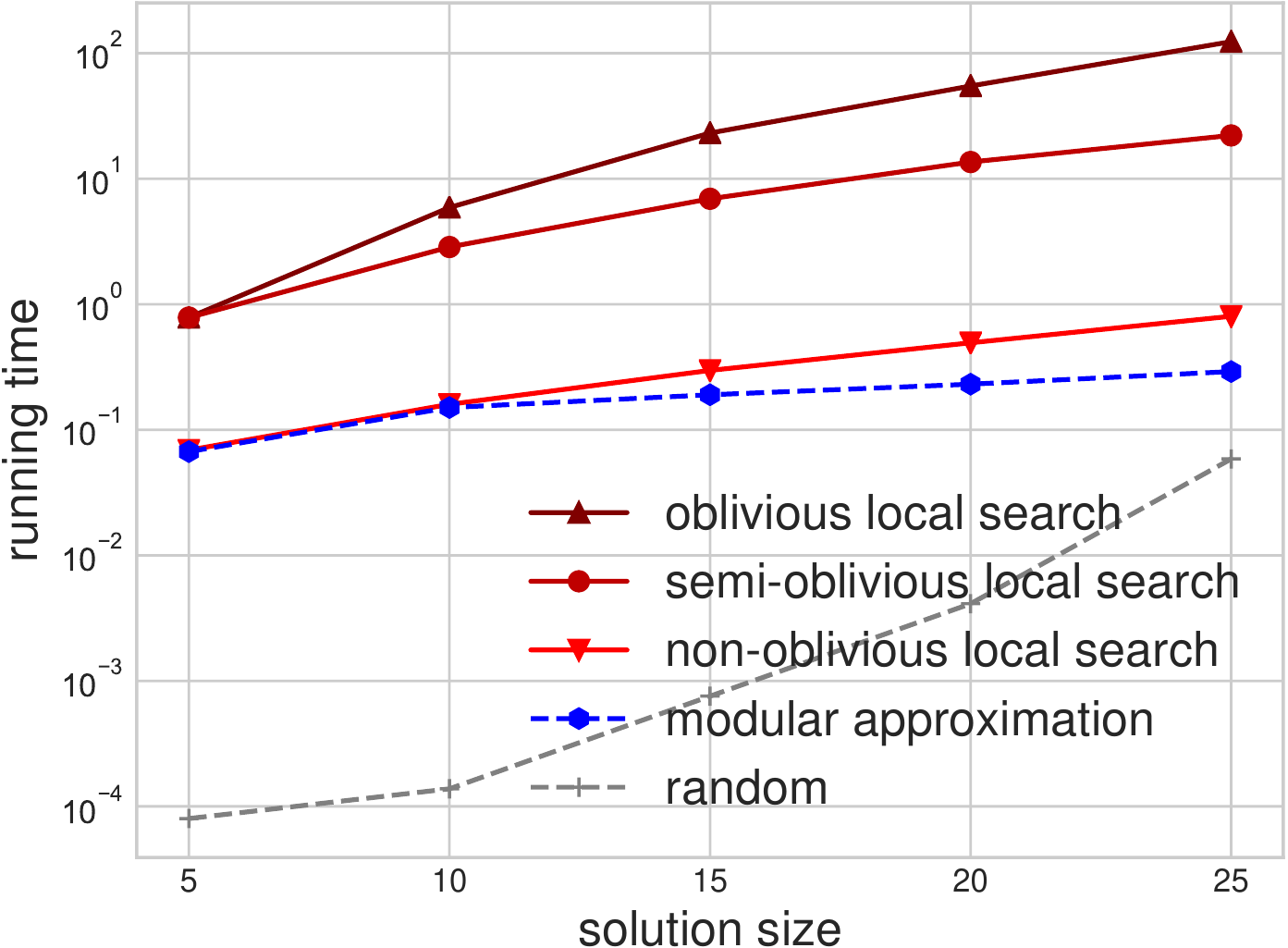}\label{fig:graphical_n10_time}
}
\subfigure[graphical, objective, $|V| = 10$]{
	\includegraphics[width=0.3\textwidth]{./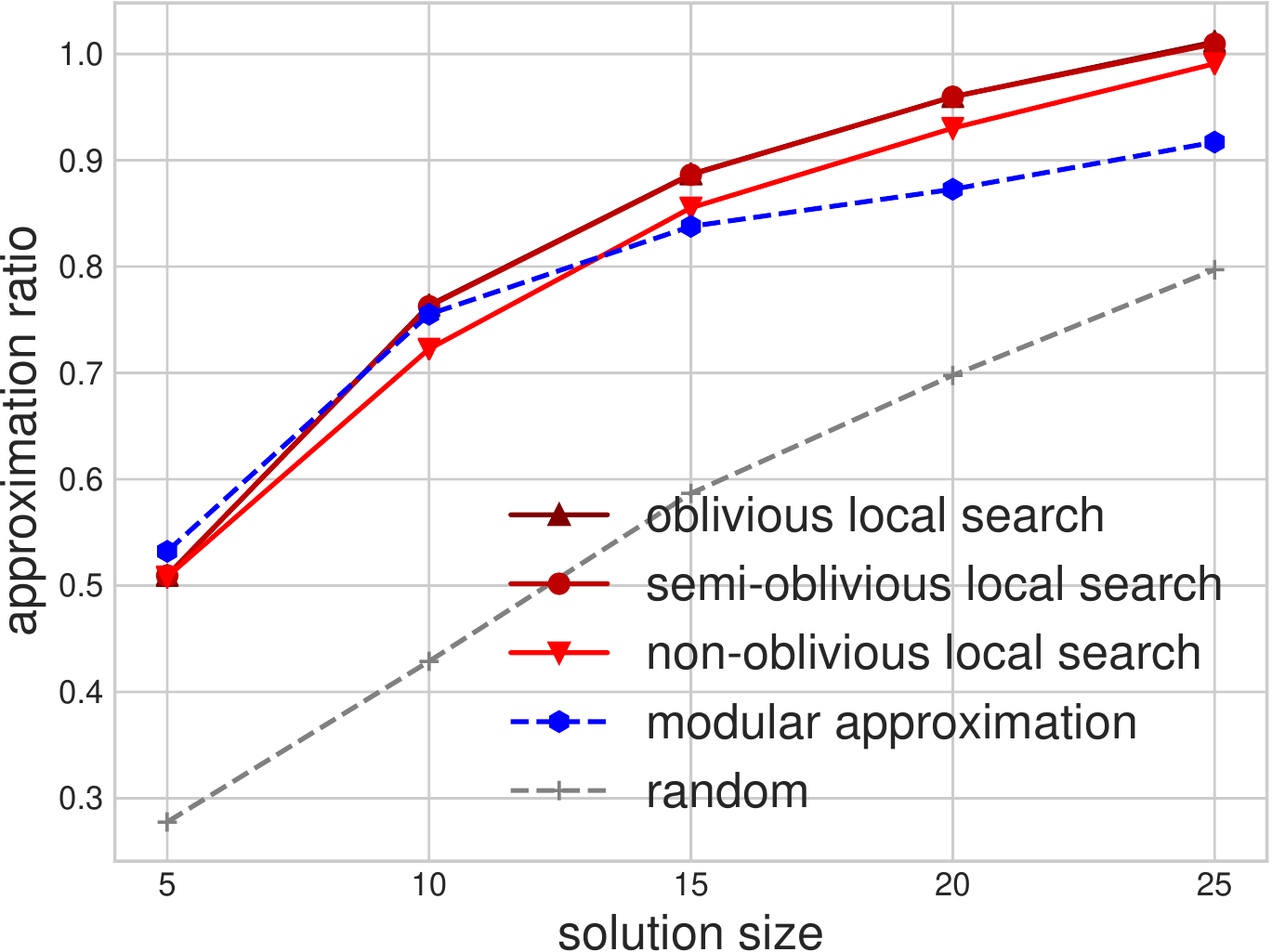}\label{fig:graphical_n10}
}
\subfigure[graphical, objective, $|V| = 20$]{
	\includegraphics[width=0.3\textwidth]{./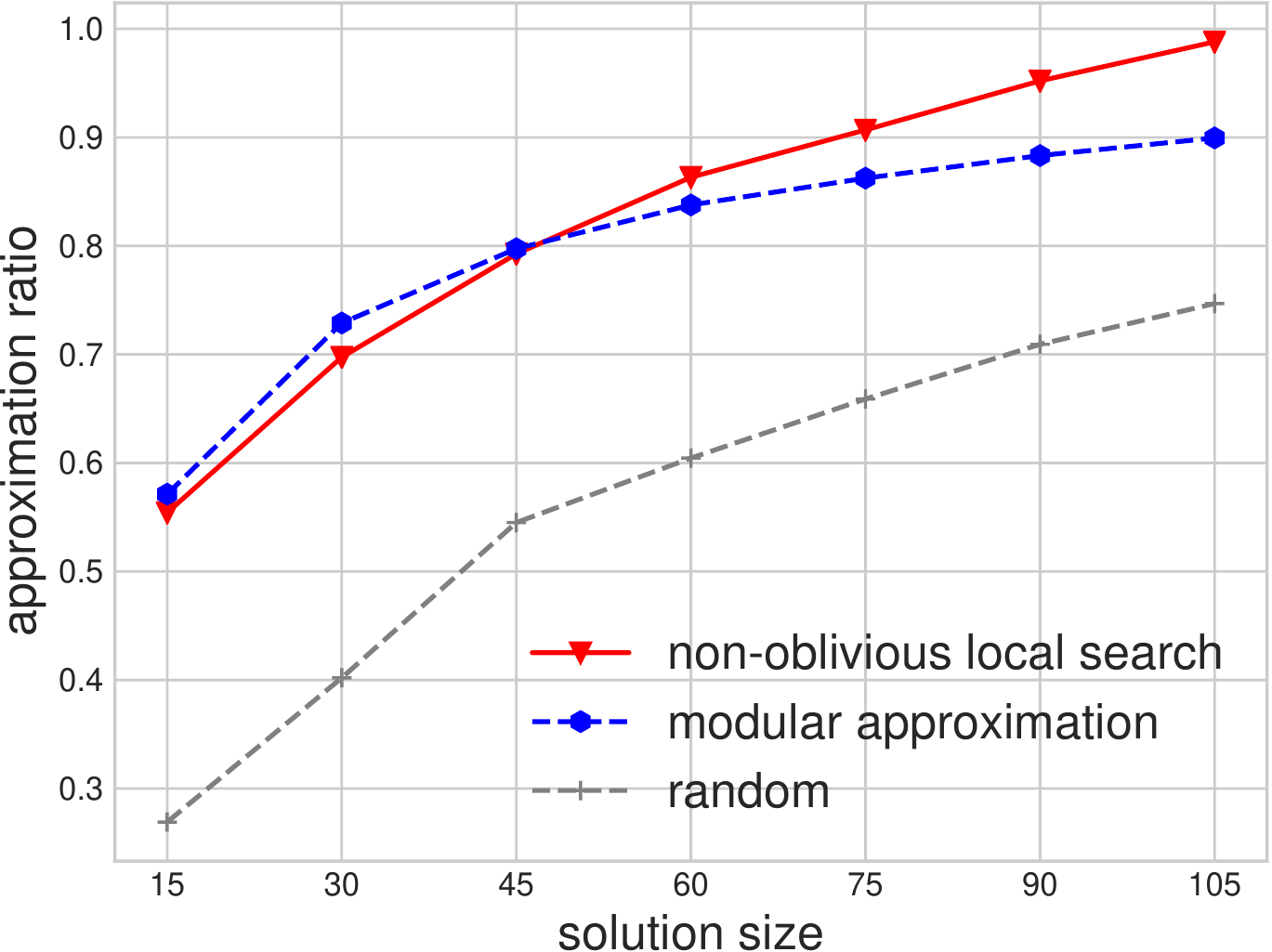}\label{fig:graphical_n20}
}
\caption{
	The experimental results on sparse linear regression under a partition matroid constraint (\ref{fig:regression_n200_time}, \ref{fig:regression_n200}, and \ref{fig:regression_n1000}) and structure learning of graphical models under a $b$-matching constraint (\ref{fig:graphical_n10_time}, \ref{fig:graphical_n10}, and \ref{fig:graphical_n20}).
	\ref{fig:regression_n200_time} shows the running time in the case where $n = 100$.
	\ref{fig:regression_n200} and \ref{fig:regression_n1000} show the objective value ($R^2$) in the case where $n = 200$ and $n = 1000$, respectively.
	\ref{fig:graphical_n10_time} shows the running time in the case where $|V| = 10$.
	\ref{fig:graphical_n10} and \ref{fig:graphical_n20} show the ratio between the objective achieved by the algorithms and the optimal objective value in the case where $|V| = 10$ and $|V| = 20$, respectively.
	}
	\label{fig:local_search_regression}
\end{figure*}

\subsection{Experiments on Sparse Regression}

\paragraph{Datasets.}
We generate synthetic datasets with a partition matroid constraint.
First, we determine the design matrix $\bfA \in \bbR^{n \times d}$ by generating each of its entries according to the uniform distribution on $[0, 1]$.
Then we normalize each of columns to ensure that the mean will be $0$ and the standard deviation will be $1$.
Suppose the set of all features are partitioned into $n_c$ equal-size categories.
We randomly select a sparse subset $S^*$ by selecting $n_p$ parameters from each category.
The response vector is determined by $\bfy = \bfA_{S^*} \bfw + \epsilon$, where $\bfw$ is a random vector generated from the standard normal distribution $\calN(0, 1)$ and $\epsilon$ is a noise vector whose each element is generated from $\calN(0, 0.2)$.
We consider two settings with different parameters.
We set $(n, d, n_c, n_p) = (200, 50, 5, 5)$ in one setting and $(n, d, n_c, n_p) = (1000, 100, 10, 10)$ in the other setting.
We use $R^2$ as the objective function to be maximized.
For each parameter, we conduct $10$ trials and plot the average.

\paragraph{Methods.}
We implement the oblivious, semi-oblivious, and non-oblivious local search algorithms.
As benchmarks, we implement the random residual greedy algorithm \citep{CFK18} and modular approximation.
We apply these methods to a partition matroid constraint with capacity $n'_p$ for each $n'_p \in \{1,\cdots,10\}$.

\paragraph{Results.}
First, we compare the proposed methods in the case of $n = 200$ (\Cref{fig:regression_n200_time} and \Cref{fig:regression_n200}).
We can observe that the non-oblivious variant is approximately $10$ times faster than the oblivious variant, while achieving an objective value comparable to that of the oblivious variant.
In comparison to the random residual greedy algorithm, the non-oblivious variant achieves a higher objective value in a similar running time.
Modular approximation is considerably faster than the other methods, but the quality of its solution is poor.
Next, we conduct experiments on larger datasets with $n = 1000$ (\Cref{fig:regression_n1000}).
The oblivious and semi-oblivious local search algorithms cannot be applied to this setting due to their slow running time.
Moreover, in this setting, we can observe that the non-oblivious variant outperforms the benchmarks.

\subsection{Experiments on Structure Learning of Graphical Models}

\paragraph{Datasets.}
We consider Ising models $G = (V, E)$ with parameter $(w_{uv})_{u, v \in V}$.
First we generate the true graphical model randomly from the configuration model with degree $d$ for all vertices.
For each edge $(u, v) \in E$, we set the parameter $w_{uv} = +0.5$ or $w_{uv} = -0.5$ uniformly at random.
We synthetically generate $100$ samples by Gibbs sampling from this Ising model.
We consider two settings with different parameters.
We set $(|V|, d) = (10, 5)$ in one setting and $(|V|, d) = (20, 7)$ in the other setting.
We consider the problem of maximizing the pseudo-log-likelihood.
For each setting, we conduct $10$ trials and plot the average.

\paragraph{Methods.}
We implement the oblivious, semi-oblivious, and non-oblivious local search algorithms with parameter $q = 1$.
Since calculating $M_{s,3}$ requires much computational cost, we use an upper bound $4 \sum_{i=1}^N \| \bfx^i \|_2^3$ instead of $M_{s,3}$ in the non-oblivious variant.
As a benchmark, we implement modular approximation, in which to maximize a linear function over a $b$-matching constraint, we use the reduction from a max-weight $b$-matching problem to a max-weight matching problem \citep[Theorem 32.4]{Schrijver} and the max-weight matching problem solver in NetwerkX library.
We also implement random selection, which randomly samples a subgraph whose degree is $d$ at all vertices.
In all methods, we use the L-BFGS-G solver in scipy.optimize library for evaluating the value of $f$.
We apply these methods to pseudo-log-likelihood maximization under a $b$-matching constraint for each $b \in \{1, \cdots, d\}$.

\paragraph{Results.}
First, we compare the proposed methods in the case of $|V| = 10$ (\Cref{fig:graphical_n10_time} and \Cref{fig:graphical_n10}).
We can observe that our acceleration techniques work well in practice.
The running time of the non-oblivious variant is competitive with that of modular approximation and its solution quality is higher than that of modular approximation for larger solution size.
Next, we conduct experiments on larger graphs with $|V| = 20$ (\Cref{fig:graphical_n20}).
Since the oblivious and semi-oblivious local search algorithms are too slow to be applied to this setting, we omit them.
Also, in this setting, we can observe that the non-oblivious variant outperforms the benchmarks particularly in cases of larger solution size.

\section*{Acknowledgements}
The author would like to thank Andreas Krause for providing insightful comments in the early stages of this study.
The author is thankful to Takeru Matsuda and Kazuki Matoya for inspiring discussions.
This study was supported by JSPS KAKENHI Grant Number JP 18J12405.

\nocite{langley00}

\bibliography{main}
\bibliographystyle{icml2020}

\newpage
\onecolumn
\appendix

\section{Omitted pseudocodes}\label{sec:full-pseudocodes}

\begin{algorithm}[h]
	\caption{Local search algorithms for a matroid constraint}\label{alg:matroid-anytime-full}
	\begin{algorithmic}[1]
		\STATE Let $X \gets \emptyset$.
		\STATE Add arbitrary elements to $X$ until $X$ is maximal in $\calI$.
		\FOR{$i = 1,\cdots,T$}
			\STATE Determine the pair of $x \in X$ and $x' \in N \setminus X$ by the following rules:\\
			$\begin{cases}
				\displaystyle (x, x') \in \argmax \{ f(X - x + x') \mid X - x + x' \in \calI \} & (\text{oblivious})\\
				\displaystyle \text{Let $\displaystyle x' \in \argmax \{ f(X - \phi_X(x') + x') - f(X) \}$ and $x = \phi_X(x')$,}\\
				\quad \displaystyle \text{where $\phi_X \colon N \setminus X \to X$ is a map that satisfies}\\
				\quad \phi_X(x') \in \argmin_{x \in X \colon X - x + x' \in \calI} (\bfw^{(X)})_x^2 & (\text{semi-oblivious})\\
				(x, x') \in \argmax_{(x, x') \colon X - x + x'} \left\{ \frac{1}{2M_{s,2}} \left( \nabla u(\bfw^{(X)}) \right)^2_{x'} - \frac{M_{s,2}}{2} \left(\bfw^{(X)}\right)_{x}^2  \right\} & (\text{non-oblivious})
			\end{cases}
				$
			\IF{$
			\begin{cases}
				f(X - x + x') - f(X) > 0 & (\text{oblivious or semi-oblivious})\\
				\frac{1}{2M_{s,2}} \left( \nabla u(\bfw^{(X)}) \right)^2_{x'} - \frac{M_{s,2}}{2} \left(\bfw^{(X)}\right)_{x}^2 > 0 & (\text{non-oblivious})
			\end{cases}
				$}
				\STATE Update the solution $X \gets X - x + x'$.
			\ELSE
				\STATE \textbf{return} $X$.
			\ENDIF
		\ENDFOR
		\STATE \textbf{return} $X$.
	\end{algorithmic}
\end{algorithm}

\begin{algorithm}[H]
	\caption{Local search algorithms for a $p$-matroid intersection or $p$-exchange system ($p \ge 2$)}\label{alg:system-anytime-full}
	\begin{algorithmic}[1]
		\STATE Let $t = \begin{cases}
			2p(q+1) & \text{in the case of $p$-matroid intersection constraints}\\
			pq + 1 & \text{in the case of $p$-exchange system constraints}.
						\end{cases}$
		\STATE Let $X \gets \emptyset$.
		\STATE Add arbitrary elements to $X$ until $X$ is maximal in $\calI$.
		\FOR{$i = 1,\cdots,T$}
			\STATE Determine $X'$ that is $q$-reachable from $X$ such that:\\
			$\begin{cases}
				\displaystyle X' \in \argmax_{X' \in \calF_q(X)} f(X') & (\text{oblivious})\\
				\displaystyle \text{Let $X' \in \argmax_{X' \in \calF_q(X) \colon \exists S, ~ X' = (X \cup S) \setminus \phi_X(S)} f(X')$,}\\
				\quad \displaystyle \text{where $\phi_X \colon 2^N \to 2^N$ is a map that satisfies}\\
				\quad \phi_X(S) \in \argmin_{T \colon (X \cup S) \setminus T \in \calF_q(X)} \| (\bfw^{(X)})_T \|^2 & (\text{semi-oblivious})\\
				X' \in \argmax_{X' \in \calF_q(X)} \left\{ \frac{1}{2M_{s,t}} \left\| \left( \nabla u(\bfw^{(X)}) \right)_{X' \setminus X} \right\|^2 - \frac{M_{s,t}}{2} \left\| \left(\bfw^{(X)}\right)_{X \setminus X'} \right\|^2  \right\} & (\text{non-oblivious})
			\end{cases}
				$
			\IF{$
			\begin{cases}
				f(X') - f(X) > 0 & (\text{oblivious or semi-oblivious})\\
				\frac{1}{2M_{s,t}} \left\| \left( \nabla u(\bfw^{(X)}) \right)_{X' \setminus X} \right\|^2 - \frac{M_{s,t}}{2} \left\| \left(\bfw^{(X)}\right)_{X \setminus X'} \right\|^2 > 0 & (\text{non-oblivious})
			\end{cases}
				$}
				\STATE Update the solution $X \gets X'$.
			\ELSE
				\STATE \textbf{return} $X$.
			\ENDIF
		\ENDFOR
		\STATE \textbf{return} $X$.
	\end{algorithmic}
\end{algorithm}

\section{Omitted Proofs}

\subsection{Properties of Matroids, $p$-Matroid Intersection, and $p$-Exchange Systems}
Here we provide important lemmas used in the proofs.

The following lemma is the exchange property of matroids.
\begin{lemma}[{Corollary 39.12a in~\citet{Schrijver}}]\label{lem:schrijver}
	Let $\calM = (N, \calI)$ be a matroid and $I, J \in \calI$ with $|I| = |J|$.
	There exists a bijection $\varphi \colon I \setminus J \to J \setminus I$ such that $I - v + \varphi(v) \in \calI$ for all $v \in I \setminus J$.
\end{lemma}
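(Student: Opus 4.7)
The plan is to reduce to the case where $I$ and $J$ are bases of the matroid, then exhibit the bijection $\varphi$ by applying Hall's theorem to a natural bipartite exchange graph.

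First I would pass to the restriction $\calM | (I \cup J)$. In this restricted matroid both $I$ and $J$ are maximal independent sets of equal size, hence bases; any exchange established inside the restriction is valid in $\calM$ as well, because restriction preserves independence. So we may assume $I$ and $J$ are bases of $\calM$ from the outset.

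Next I would construct a bipartite graph $H$ with parts $I \setminus J$ and $J \setminus I$, placing an edge between $v \in I \setminus J$ and $w \in J \setminus I$ whenever $I - v + w \in \calI$. A bijection $\varphi$ with the desired property exists precisely when $H$ has a perfect matching, so by Hall's marriage theorem, and since $|I \setminus J| = |J \setminus I|$, it suffices to show that $|N_H(S)| \ge |S|$ for every $S \subseteq I \setminus J$.

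To verify this Hall condition, fix $S \subseteq I \setminus J$ and let $N = N_H(S) \subseteq J \setminus I$. Since $I$ is a base, for each $w \in J \setminus I$ the set $I + w$ contains a unique fundamental circuit $C(w, I)$; the standard matroid identity gives $I - v + w \in \calI$ if and only if $v \in C(w, I) - w$. Thus $w \in N$ if and only if $C(w, I) \cap S \neq \emptyset$. For every $w \in (J \setminus I) \setminus N$ we therefore have $C(w, I) \subseteq (I \setminus S) + w$, which places $w$ in the span of $I \setminus S$. Combined with $J \cap I \subseteq I \setminus S$ (from $S \subseteq I \setminus J$), this shows $J \subseteq \mathrm{span}((I \setminus S) \cup N)$. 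Because $J$ is a base, $(I \setminus S) \cup N$ has rank $|I|$, so $|I \setminus S| + |N| \ge |I|$, which rearranges to $|N| \ge |S|$.

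The main obstacle I anticipate is the span/circuit bookkeeping in the last step: one has to cover every element of $J$ by $(I \setminus S) \cup N$ via a clean three-way case analysis according to whether $w$ lies in $J \cap I$, in $N$, or in $(J \setminus I) \setminus N$, and this argument critically uses that $I$ is a base so that fundamental circuits are well-defined. Everything else is routine bookkeeping around Hall's theorem and the reduction to the base case.
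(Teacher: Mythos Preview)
The paper does not actually prove this lemma; it is quoted from \citet{Schrijver} as a black box. Your Hall-theorem argument on the bipartite exchange graph is the standard proof and the span/circuit computation in your final paragraph is correct. There is, however, a genuine gap in your reduction step.

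Restricting to $\calM|(I\cup J)$ does \emph{not} in general make $I$ and $J$ bases: take $\calM = U_{3,4}$ on $\{1,2,3,4\}$ with $I=\{1,2\}$ and $J=\{3,4\}$; then $I\cup J$ is the whole ground set and neither $I$ nor $J$ is maximal. So the sentence ``in this restricted matroid both $I$ and $J$ are maximal independent sets of equal size, hence bases'' is false as written. Two easy repairs: (i) replace restriction by \emph{truncation} of $\calM$ to rank $|I|$, which does force $I$ and $J$ to be bases; or (ii) simply drop the reduction, since your rank argument never needs $I$ or $J$ to be bases. For (ii), note that if $S\neq\emptyset$ and $w\in J\setminus I$ has $I+w$ independent, then $I-v+w$ is independent for every $v\in S$, so $w\in N_H(S)$ automatically; hence every $w\in (J\setminus I)\setminus N$ has $I+w$ dependent and the fundamental circuit $C(w,I)$ is well-defined. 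The rest of your computation then goes through verbatim using only that $J$ is independent with $|J|=|I|$ (so $r(J)=|I|$), not that $J$ is a base. Your stated worry that the argument ``critically uses that $I$ is a base so that fundamental circuits are well-defined'' is therefore a misconception: fundamental circuits exist relative to any independent set once the added element creates a dependency.
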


The following lemma is on the exchange property of $p$-matroid intersection, which was first used for analyzing local search algorithms for submodular maximization.
\begin{lemma}[{\citep{LSV10}}]\label{lem:multiset-intersection}
	Suppose $\calI$ is a $p$-matroid intersection.
	Let $q \in \bbZ$ be any positive integer.
	For any $S, T \in \calI$, there exists a multiset $\calP \subseteq 2^N$ and an integer $\eta$ (depending on $p$ and $q$) that satisfies the following conditions.
	\begin{enumerate}
		\item For all $P \in \calP$, the symmetric difference is feasible, i.e., $S \triangle P \in \calI$, and $S \triangle P$ is $q$-reachable (\Cref{def:reachability-intersection}) from $S$.
		\item Each element $v \in T \setminus S$ appears in exactly $q \eta$ sets in $\calP$.
		\item Each element $v \in S \setminus T$ appears in at most $(pq - q + 1) \eta$ sets in $\calP$.
	\end{enumerate}
\end{lemma}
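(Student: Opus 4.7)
The plan is to follow the construction of \citet{LSV10}, which builds the multiset $\calP$ by combining per-matroid exchange maps for each of the $p$ matroids comprising $\calI$. As a preliminary reduction, I would assume without loss of generality that $|S \setminus T| = |T \setminus S|$, padding with parallel copies or loops in each matroid if necessary (since adding loops preserves membership in every $\calI_j$ but symmetrizes the sizes). Then, for each $j \in [p]$, applying the strong base exchange property \Cref{lem:schrijver} to $\calI_j$ yields a bijection $\pi_j \colon T \setminus S \to S \setminus T$ such that $S - \pi_j(v) + v \in \calI_j$ for every $v \in T \setminus S$.

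Next, I would construct $\calP$ by enumerating carefully chosen blocks of $T \setminus S$: for each $q$-subset $A \subseteq T \setminus S$ in a designed family (for instance, the cyclic shifts of a $q$-window under an arbitrary ordering of $T \setminus S$), set $P_A = A \cup \bigcup_{j=1}^p \pi_j(A)$. This $P_A \subseteq S \triangle T$ satisfies $|P_A \cap (T \setminus S)| = q$ and $|P_A \cap (S \setminus T)| \le pq$, so $S \triangle P_A$ is $q$-reachable from $S$ in the sense of \Cref{def:reachability-intersection}. Feasibility $S \triangle P_A \in \calI_j$ for each $j$ is then obtained by invoking the symmetric base exchange axiom of $\calI_j$ on the bijective simultaneous swap $A \leftrightarrow \pi_j(A)$, giving $S \triangle P_A \in \bigcap_j \calI_j = \calI$.

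For the multiplicity bounds, the block family can be engineered so that each $v \in T \setminus S$ appears in exactly $q\eta$ blocks $A$ for an integer $\eta$ depending on $|T \setminus S|$, $p$, and $q$ (a cyclic window construction yields $\eta = 1$ after normalization, up to taking a common multiple of $|T \setminus S|$ and $q$). Each $u \in S \setminus T$ appears in $P_A$ precisely when $A$ contains some preimage $\pi_j^{-1}(u)$ for $j \in [p]$. A naive union bound over the $p$ preimages yields only the weaker count $pq\eta$; the sharper $(pq - q + 1)\eta$ bound is obtained by observing that whenever two preimages $\pi_j^{-1}(u)$ and $\pi_{j'}^{-1}(u)$ lie within the same $q$-window, the contributions to the count collide and must be merged rather than summed.

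The main obstacle is precisely this sharper multiplicity bound $(pq - q + 1)\eta$ on elements of $S \setminus T$, rather than the naive $pq\eta$. The saving of $q - 1$ is exactly what ultimately produces the $p - 1 + 1/q$ factor appearing in \Cref{thm:system-anytime}, so it cannot be bypassed. Establishing it requires a careful accounting that aligns the preimage pattern $\{\pi_j^{-1}(u) : j \in [p]\}$ with the block design and tracks overlaps within a single orbit of the cyclic action; this is the core combinatorial content of the construction and the step I would spend most effort on verifying.
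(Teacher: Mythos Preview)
The paper does not prove this lemma; it is quoted from \citet{LSV10} and used as a black box, so there is no in-paper argument to compare against. Evaluating your reconstruction on its own merits, there are two genuine gaps.

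\textbf{Feasibility.} You claim $S \triangle P_A \in \calI_j$ by invoking a ``symmetric base exchange axiom'' on the simultaneous swap $A \leftrightarrow \pi_j(A)$. There is no such axiom for general matroids. \Cref{lem:schrijver} guarantees only that each \emph{single} swap $S - \pi_j(v) + v$ lies in $\calI_j$; it says nothing about $(S \setminus \pi_j(A)) \cup A$ when $|A| > 1$. What you are implicitly using is strong base orderability, which already fails for the graphic matroid of $K_4$. This is not a detail you can patch locally: note that \Cref{def:reachability-intersection} allows $|T \setminus S| \le 2q$ and $|S \setminus T| \le 2pq$ rather than $q$ and $pq$, and your construction would leave that factor of $2$ completely slack---a sign that the actual \citet{LSV10} construction is structurally different and uses the extra room.

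\textbf{Multiplicity.} The cyclic $q$-window family cannot deliver $(pq - q + 1)\eta$. If the $p$ preimages $\pi_1^{-1}(u), \ldots, \pi_p^{-1}(u)$ of some $u \in S \setminus T$ happen to lie at pairwise cyclic distance at least $q$, the windows containing them are disjoint and $u$ appears in exactly $pq$ blocks; no collisions occur. You cannot arrange the preimages to be close, because the bijections $\pi_j$ are dictated by the matroids, not chosen by you. The $q-1$ saving in \citet{LSV10} does not come from incidental overlaps among $p$ independently obtained matchings; it comes from singling out one of the $p$ matroids and using its exchange structure to organise the blocks, so that only the remaining $p-1$ matchings contribute to the per-element load on $S \setminus T$ beyond a bounded overhead. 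That asymmetric treatment is the core idea you are missing.
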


A property similar to that for $p$-matroid intersection was known for $p$-exchange systems as follows.
\begin{lemma}[{\citep{FNSW11}; The full proof can be found in \citet{Fel13}}]\label{lem:multiset-exchange}
	Suppose $\calI$ is a $p$-exchange system.
	Let $q \in \bbZ$ be any positive integer.
	For any $S, T \in \calI$, there exists a multiset $\calP \subseteq 2^N$ and an integer $\eta$ (depending on $p$ and $q$) that satisfies the following conditions.
	\begin{enumerate}
		\item For all $P \in \calP$, the symmetric difference is feasible, i.e., $S \triangle P \in \calI$, and $S \triangle P$ is $q$-reachable (\Cref{def:reachability-system}) from $S$.
		\item Each element $v \in T \setminus S$ appears in at most $q\eta$ sets in $\calP$.
		\item Each element $v \in S \setminus T$ appears in at most $(pq-q+1)\eta$ sets in $\calP$.
	\end{enumerate}
\end{lemma}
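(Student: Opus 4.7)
The plan is to exhibit an explicit multiset $\calP$ built from the exchange map guaranteed by the defining property of $p$-exchange systems, following the construction of \citet{FNSW11} (with the detailed combinatorics in \citet{Fel13}).

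First I would invoke the defining property of $p$-exchange systems applied to the pair $(S, T)$, which yields a map $\varphi \colon (T \setminus S) \to 2^{S \setminus T}$ satisfying (a) $|\varphi(v)| \le p$, (b) each $u \in S \setminus T$ lies in at most $p$ of the sets $\varphi(v)$, and (c) $(S \setminus \bigcup_{v \in X} \varphi(v)) \cup X \in \calI$ for every $X \subseteq T \setminus S$. The multiset $\calP$ will consist only of sets of the form $P_X \coloneqq X \cup \varphi(X)$, where $\varphi(X) \coloneqq \bigcup_{v \in X} \varphi(v)$ and $X \subseteq T \setminus S$ satisfies $|X| \le q$. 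Since $X \subseteq T \setminus S$ and $\varphi(X) \subseteq S \setminus T$ are disjoint, $S \triangle P_X = (S \setminus \varphi(X)) \cup X$, which lies in $\calI$ by (c); so condition~1 (feasibility) follows automatically provided we also certify $q$-reachability, which by \Cref{def:reachability-system} reduces to the two inequalities $|X| \le q$ and $|\varphi(X)| \le pq - q + 1$.

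The collection of $X$'s is then produced by a round-robin construction. Fix an arbitrary cyclic ordering $v_1, \ldots, v_m$ of $T \setminus S$ and take $X_i \coloneqq \{v_i, v_{i+1}, \ldots, v_{i+q-1}\}$ (indices modulo $m$), each with an appropriate integer multiplicity. Each $v \in T \setminus S$ lies in exactly $q$ of the windows, so after replicating uniformly by a common multiple $\eta$ the count becomes $q\eta$ and condition~2 holds. For $u \in S \setminus T$, set $d_u \coloneqq |\{v \in T \setminus S : u \in \varphi(v)\}| \le p$ by property~(b); then $u$ appears in $\calP$ precisely in those windows meeting $\varphi^{-1}(u)$. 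A naive count gives at most $d_u q \le pq$ windows per original pass, but because consecutive windows overlap in $q-1$ positions one can shave $q-1$ from the worst case, yielding the target bound $(pq - q + 1)\eta$ required by condition~3. When the adversarial placement of $\varphi^{-1}(u)$ on the cycle prevents the direct cyclic estimate from giving the tight bound, I would follow \citet{Fel13} and use an iterative compression/swap argument: whenever some $u \in S \setminus T$ is over-covered, replace an offending $X_i$ by a smaller valid exchange obtained from (c), reducing the coverage of $u$ while re-balancing multiplicities to preserve the coverage guarantee on $T \setminus S$. This procedure terminates with a multiset satisfying all three conditions simultaneously.

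The hard part will be establishing the sharp constant $pq - q + 1$ in condition~3 rather than the naive $pq$ that falls out of applying $|\varphi(v)| \le p$ term-by-term to $X$ of size $q$. This requires exploiting the overlap of consecutive windows together with properties (a) and (b) jointly, rather than independently; in FNSW's treatment this is implemented via rounding of a fractional cover of $T \setminus S$ by $q$-subsets with controlled intersection pattern on the $\varphi$-images. Because the lemma is explicitly attributed to \citet{FNSW11} with a full proof in \citet{Fel13}, the write-up would present the construction explicitly and defer the detailed combinatorial verification to those references.
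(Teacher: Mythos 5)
First, note that the paper itself gives no proof of this lemma: it is imported verbatim from \citet{FNSW11}, with the full argument deferred to \citet{Fel13}. So there is no in-paper proof to compare against, and your write-up has to stand on its own as a reconstruction of that argument. It does not.

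The gap is in the core combinatorial construction. Taking $\calP$ to consist of sets $P_X = X \cup \varphi(X)$ for cyclic windows $X$ of $q$ \emph{consecutive} elements under an \emph{arbitrary} ordering of $T \setminus S$ fails on both counts where the constant $pq-q+1$ is needed. For $q$-reachability of each individual exchange (\Cref{def:reachability-system}) you need $|S \setminus (S \triangle P_X)| = |\varphi(X)| \le pq - q + 1$, but for an arbitrary window the images $\varphi(v)$, $v \in X$, can be pairwise disjoint, giving $|\varphi(X)| = pq$; so condition~1 already fails, not just the aggregate count. Likewise for condition~3: if the elements of $\varphi^{-1}(u)$ are spread so that no window contains two of them, $u$ appears in exactly $d_u q$ windows, which can equal $pq > pq-q+1$ for $q \ge 2$. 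The $q-1$ saving is not a consequence of windows overlapping \emph{in positions of the cyclic order}; it comes from choosing each $X$ to be a connected path in an auxiliary graph on $T \setminus S$ whose edges record intersections of the $\varphi$-images (so that each of the $q-1$ edges along the path certifies at least one repeated element of $S\setminus T$, forcing $|\varphi(X)| \le pq-(q-1)$), together with a separate treatment of elements whose images intersect nothing. That graph-theoretic step is the entire content of the lemma, and your proposal replaces it with an unspecified ``iterative compression/swap argument'' that defers back to \citet{Fel13}. As written, the construction you exhibit is incorrect and the patch is not an argument; either reproduce the path-decomposition proof or state the lemma as a black-box citation, as the paper does, without offering a construction.
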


\subsection{Proof of the Localizability of Sparse Optimization}
To prove the localizability of sparse optimization, we use the following lemmas.

\begin{lemma}\label{lem:feature-smooth}
	Suppose $u \colon 2^N \to \bbR$ is a continuously differentiable function with $u(\bfzero) \ge 0$.
	Assume $u$ is restricted strong concave on $\Omega_{2s}$ and restricted smooth on $\Omega_{s,t}$.
	If $f \colon 2^N \to \bbR$ is a set function defined as $f(X) = \max_{\supp(\bfw) \subseteq X} u(\bfw)$, then for any $X, X' \subseteq N$ with $s = \max \{ |X|, |X'| \}$ and $t = |X \triangle X'|$, we have
	\begin{equation*}
		f(X') - f(X) \ge \frac{1}{2 M_{s, t}} \left\| \left(\nabla u(\bfw^{(X)})\right)_{X' \setminus X} \right\|^2 - \frac{M_{s, t}}{2} \left\| \left(\bfw^{(X)}\right)_{X \setminus X'} \right\|^2.
	\end{equation*}
\end{lemma}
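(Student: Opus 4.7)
The strategy is to exhibit an explicit test vector $\bfz$ with $\supp(\bfz) \subseteq X'$, so that $f(X') \ge u(\bfz)$, and then lower-bound $u(\bfz) - u(\bfw^{(X)})$ by applying restricted smoothness in the direction $\bfz - \bfw^{(X)}$. The key observation, which makes the gradient term in the smoothness bound tractable, is first-order optimality: since $\bfw^{(X)}$ maximizes $u$ over vectors supported in $X$, we have $\bigl(\nabla u(\bfw^{(X)})\bigr)_i = 0$ for every $i \in X$.

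\textbf{Construction of $\bfz$.} I define $\bfd \in \bbR^n$ by
\begin{equation*}
d_i = \begin{cases} -\bigl(\bfw^{(X)}\bigr)_i & i \in X \setminus X',\\ \frac{1}{M_{s,t}} \bigl(\nabla u(\bfw^{(X)})\bigr)_i & i \in X' \setminus X,\\ 0 & \text{otherwise},\end{cases}
\end{equation*}
and set $\bfz \coloneqq \bfw^{(X)} + \bfd$. By construction $\supp(\bfz) \subseteq X \cap X' \cup (X' \setminus X) = X'$, so $f(X') \ge u(\bfz)$, while $f(X) = u(\bfw^{(X)})$.

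\textbf{Applying restricted smoothness.} Since $\|\bfw^{(X)}\|_0 \le |X| \le s$, $\|\bfz\|_0 \le |X'| \le s$, and $\|\bfz - \bfw^{(X)}\|_0 = \|\bfd\|_0 \le |X \triangle X'| = t$, the pair $(\bfw^{(X)}, \bfz) \in \Omega_{s,t}$. Restricted smoothness therefore yields
\begin{equation*}
u(\bfz) - u(\bfw^{(X)}) \ge \bigl\langle \nabla u(\bfw^{(X)}), \bfd \bigr\rangle - \frac{M_{s,t}}{2}\,\|\bfd\|_2^2.
\end{equation*}
The contribution of the coordinates in $X \setminus X'$ to the inner product vanishes by first-order optimality, so
\begin{equation*}
\bigl\langle \nabla u(\bfw^{(X)}), \bfd \bigr\rangle = \frac{1}{M_{s,t}}\,\bigl\|\bigl(\nabla u(\bfw^{(X)})\bigr)_{X' \setminus X}\bigr\|_2^2,
\end{equation*}
while the squared norm splits cleanly as
\begin{equation*}
\|\bfd\|_2^2 = \bigl\|\bigl(\bfw^{(X)}\bigr)_{X \setminus X'}\bigr\|_2^2 + \frac{1}{M_{s,t}^2}\,\bigl\|\bigl(\nabla u(\bfw^{(X)})\bigr)_{X' \setminus X}\bigr\|_2^2.
\end{equation*}
Substituting and combining the gradient-norm terms $\tfrac{1}{M_{s,t}} - \tfrac{1}{2M_{s,t}} = \tfrac{1}{2M_{s,t}}$ gives exactly the desired inequality.

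\textbf{Where the work is.} There is no deep obstacle: this is essentially the standard quadratic upper-bound argument. The one place to be careful is justifying the support-size bookkeeping so that the pair $(\bfw^{(X)}, \bfz)$ lies in $\Omega_{s,t}$ (hence the hypothesis $s = \max\{|X|,|X'|\}$ and $t = |X \triangle X'|$), and using the first-order stationarity of $\bfw^{(X)}$ on $X$ (which relies on restricted strong concavity to make $\bfw^{(X)}$ well-defined and unconstrained within the subspace supported on $X$). The rest is algebra.
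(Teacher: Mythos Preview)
Your proof is correct and follows essentially the same approach as the paper: both exhibit a test vector supported on $X'$ obtained by zeroing out $\bfw^{(X)}$ on $X\setminus X'$ and taking a gradient step on $X'\setminus X$, then apply restricted smoothness and first-order optimality of $\bfw^{(X)}$. The only cosmetic difference is that the paper leaves the $X'\setminus X$ component as a free parameter $\bfz$ and optimizes it at the end, whereas you plug in the optimal step $\tfrac{1}{M_{s,t}}\bigl(\nabla u(\bfw^{(X)})\bigr)_{X'\setminus X}$ from the start; your explicit invocation of stationarity on $X$ is something the paper uses implicitly.
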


\begin{proof}
	From the restricted smoothness of $u$, for any $\bfz \in \bbR^n$ with $\supp(\bfz) \subseteq X' \setminus X$, we have
    \begin{align*}
        f(X') - f(X)
        &= u (\bfw^{(X')}) - u (\bfw^{(X)})\\
        &\ge u ((\bfw^{(X)})_{X \cap X'} + \bfz) - u (\bfw^{(X)})\\
        &\ge \left\langle \nabla u (\bfw^{(X)}), \bfz - (\bfw^{(X)})_{X \setminus X'} \right\rangle - \frac{M_{s,t}}{2} \left\| \bfz - (\bfw^{(X)})_{X \setminus X'} \right\|^2.
    \end{align*}
	Since this inequality holds for every $\bfz$ with $\supp(\bfz) \subseteq X' \setminus X$, by optimizing it for $\bfz$, we obtain
    \begin{equation*}
        f(X') - f(X)
        \ge \frac{1}{2M_{s,t}} \left\| \nabla u(\bfw^{(X)})_{X' \setminus X} \right\|^2 - \frac{M_{s,t}}{2} \left\| (\bfw^{(X)})_{X \setminus X'} \right\|^2.
    \end{equation*}
\end{proof}

\begin{lemma}\label{lem:feature-concave}
	Suppose $u \colon 2^N \to \bbR$ is a continuously differentiable function with $u(\bfzero) \ge 0$.
	Assume $u$ is restricted strong concave on $\Omega_{2s}$ and restricted smooth on $\Omega_{s,t}$.
	If $f \colon 2^N \to \bbR$ is a set function defined as $f(X) = \max_{\supp(\bfw) \subseteq X} u(\bfw)$, then for any $X, X' \subseteq N$ with $s = \max \{ |X|, |X^*| \}$, we have
	\begin{equation*}
		f(X^*) - f(X) \le \frac{1}{2 m_{2s}} \left\| \left(\nabla u(\bfw^{(X)})\right)_{X^* \setminus X} \right\|^2 - \frac{m_{2s}}{2} \left\| \left(\bfw^{(X)}\right)_{X \setminus X^*} \right\|^2.
	\end{equation*}
\end{lemma}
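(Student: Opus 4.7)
The plan is to apply the restricted strong concavity inequality to the pair $(\bfw^{(X)}, \bfy)$ for an arbitrary $\bfy$ whose support lies in $X^*$, and then take the supremum of both sides over such $\bfy$. Because $\bfw^{(X)}$ and $\bfy$ each have at most $s$ nonzero entries and their difference is supported on $X \cup X^*$ (a set of size at most $2s$), the pair lies in $\Omega_{2s}$, so RSC applies and gives
\begin{equation*}
u(\bfy) \le u(\bfw^{(X)}) + \langle \nabla u(\bfw^{(X)}), \bfy - \bfw^{(X)} \rangle - \tfrac{m_{2s}}{2}\|\bfy - \bfw^{(X)}\|^2.
\end{equation*}
Maximizing the left side over $\supp(\bfy) \subseteq X^*$ produces $f(X^*)$, and the maximum of the right side over the same set of $\bfy$ is an upper bound on $f(X^*) - f(X)$.

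The key simplification comes from first-order optimality of $\bfw^{(X)}$. Since $\bfw^{(X)}$ maximizes the differentiable function $u$ over the subspace $\{\bfw : \supp(\bfw) \subseteq X\}$, we have $(\nabla u(\bfw^{(X)}))_i = 0$ for every $i \in X$. This kills the contribution of coordinates in $X \cap X^*$ and $X \setminus X^*$ from the inner product term, leaving only the gradient coordinates indexed by $X^* \setminus X$.

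Next I would substitute $\bfz \coloneqq \bfy - \bfw^{(X)}$ and unpack the constraint $\supp(\bfy) \subseteq X^*$ coordinate by coordinate: it forces $\bfz_i = -(\bfw^{(X)})_i$ on $i \in X \setminus X^*$, forces $\bfz_i = 0$ outside $X \cup X^*$, and leaves $\bfz_i$ free on $X \cap X^*$ and on $X^* \setminus X$. Combining these pieces with the vanishing of the gradient on $X$, the objective in $\bfz$ decouples into three parts: a constant $-\tfrac{m_{2s}}{2}\|(\bfw^{(X)})_{X \setminus X^*}\|^2$ from the forced coordinates, a purely negative-quadratic term in $\bfz_{X \cap X^*}$ that is maximized at $0$, and a linear-minus-quadratic term in $\bfz_{X^* \setminus X}$ of the form $\langle g, \bfz_{X^* \setminus X} \rangle - \tfrac{m_{2s}}{2}\|\bfz_{X^* \setminus X}\|^2$ with $g = (\nabla u(\bfw^{(X)}))_{X^* \setminus X}$. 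Completing the square on the last piece yields a maximum of $\tfrac{1}{2m_{2s}}\|g\|^2$, and assembling the three contributions produces exactly the claimed upper bound.

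The only real obstacle is the bookkeeping: one must partition the coordinates into the three regions $X \setminus X^*$, $X \cap X^*$, $X^* \setminus X$, carefully recognise that the feasibility constraint is affine (not linear) in $\bfz$ due to the nonzero values of $\bfw^{(X)}$ on $X \setminus X^*$, and verify throughout that the pair remains in $\Omega_{2s}$. Once the first-order optimality and the support bookkeeping are in place, the computation reduces to an elementary quadratic maximization and the lemma follows.
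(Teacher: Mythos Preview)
Your proposal is correct and follows essentially the same approach as the paper: apply restricted strong concavity at $\bfw^{(X)}$, bound $f(X^*) - f(X)$ by the maximum of the resulting concave quadratic over all $\bfy$ with $\supp(\bfy) \subseteq X^*$, and evaluate that maximum. The paper presents this in four lines and simply asserts the final equality, whereas you spell out the mechanism behind it---the first-order optimality condition $(\nabla u(\bfw^{(X)}))_X = 0$, the coordinate decomposition into $X \setminus X^*$, $X \cap X^*$, and $X^* \setminus X$, and the completion of the square---which is exactly the computation the paper suppresses.
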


\begin{proof}
	From the restricted strong concavity of $u$, we obtain
    \begin{align*}
        f(X^*) - f(X)
        &= u(\bfw^{(X^*)}) - u(\bfw^{(X)}) \nonumber\\
        &\le \left\langle \nabla u (\bfw^{(X)}), \bfw^{(X^*)} - \bfw^{(X)} \right\rangle - \frac{m_{2s}}{2} \left\| \bfw^{(X^*)} - \bfw^{(X)} \right\|^2 \nonumber \\
        &\le \max_{\bfz \colon \supp(\bfz) \subseteq X^*} \left\{ \left\langle \nabla u (\bfw^{(X)}), \bfz - \bfw^{(X)} \right\rangle - \frac{m_{2s}}{2} \left\| \bfz - \bfw^{(X)} \right\|^2 \right\} \nonumber \\
        &= \frac{1}{2m_{2s}} \left\| \left( \nabla u (\bfw^{(X)}) \right)_{X^* \setminus X} \right\|^2 - \frac{m_{2s}}{2} \left\| \left( \bfw^{(X)} \right)_{X \setminus X^*} \right\|^2.
    \end{align*}
\end{proof}

Now we prove \Cref{lem:feature-exchange} from the above two lemmas.

\begin{proof}[Proof of \Cref{lem:feature-exchange}]
	From \Cref{lem:feature-smooth}, we have
	\begin{equation*}
		f(X \triangle P) - f(X)
		\ge \frac{1}{2M_{s,t}} \left\| \left( \nabla u(\bfw^{(X)}) \right)_{P \setminus X} \right\|^2 - \frac{M_{s,t}}{2} \left\| \left( \bfw^{(X)} \right)_{P \cap X} \right\|^2
	\end{equation*}
	for all $P \in \calP$.
	By adding this inequality for each $P \in \calP$, we obtain 
	\begin{equation*}
		\sum_{P \in \calP} \left\{ f(X \triangle P) - f(X) \right\}
		\ge k \frac{1}{2M_{s,t}} \left\| \left( \nabla u(\bfw^{(X)}) \right)_{X^* \setminus X} \right\|^2 - \ell \frac{M_{s,t}}{2} \left\| \left( \bfw^{(X)} \right)_{X \setminus X^*} \right\|^2,
	\end{equation*}
	where we used the fact that each element in $v \in X^* \setminus X$ appears in at least $k$ sets in $\calP$ and each element in $X \setminus X^*$ appears in at most $\ell$ sets in $\calP$.
	From the strong concavity of $u$, by applying \Cref{lem:feature-concave}, we obtain
	\begin{equation*}
		f(X^*) - f(X) \le \frac{1}{2m_{2s}} \left\| \left( \nabla u (\bfw^{(X)}) \right)_{X^* \setminus X} \right\|^2 - \frac{m_{2s}}{2} \left\| \left( \bfw^{(X)} \right)_{X \setminus X^*} \right\|^2
	\end{equation*}
	and 
	\begin{align*}
		- f(X) 
		&\le f(\emptyset) - f(X) \\
		&\le - \frac{m_{2s}}{2} \left\| \left( \bfw^{(X)} \right)_{X} \right\|^2\\
		&\le - \frac{m_{2s}}{2} \left\| \left( \bfw^{(X)} \right)_{X \setminus X^*} \right\|^2.
	\end{align*}
	By adding the former inequality multiplied by $km_{2s}/M_{s,t}$ and the latter inequality multiplied by $\ell M_{s,t}/m_{2s}-km_{2s}/M_{s,t}$, we obtain
	\begin{equation*}
		\sum_{P \in \calP} \left\{ f(X \triangle P) - f(X) \right\}
		\ge \frac{m_{2s}}{M_{s,t}} k f(X^*) - \frac{M_{s,t}}{m_{2s}} \ell f(X).
	\end{equation*}
\end{proof}

\subsection{Proof of \Cref{thm:matroid-anytime}}

\begin{proof}[Proof of \Cref{thm:matroid-anytime}]
	Let $X$ be the output of the algorithm and $X^*$ an optimal solution.
	From \Cref{lem:schrijver}, we have a bijection $\phi \colon X^* \setminus X \to X \setminus X^*$ such that $X - \phi(x^*) + x^* \in \calI$ for all $x^* \in X^* \setminus X$.

	Suppose at some iteration the solution is updated from $X$ to $X - x + x'$.
	Since $\displaystyle (x, x') \in \argmax \{ f(X - x + x') \mid X - x + x' \in \calI \}$ and $f(X - x + x') - f(X) > 0$, we have
	\begin{align*}
		f(X - x + x') - f(X)
		&= \max_{(x, x') \colon X - x + x' \in \calI} f(X - x + x') - f(X) \\
		&\ge \frac{1}{s} \sum_{x^* \in X^* \setminus X} \left\{ f(X - \phi(x^*) + x^*) - f(X) \right\}.
	\end{align*}
	By setting $\calP = \{ \{x, \phi(x) \} \mid x \in X \setminus X^* \}$, each element in $X^* \setminus X$ and $X \setminus X^*$ appears exactly once in $\calP$.
	Since $f$ is $(\alpha, \beta_1, \beta_2)$-localizable with size $s$ and exchange size $2$, we obtain
	\begin{equation*}
		\sum_{x^* \in X^* \setminus X} \left\{ f(X - \phi(x^*) + x^*) - f(X) \right\}
		\ge \alpha f(X^*) - (\beta_1 + \beta_2) f(X).
	\end{equation*}
	By combining these inequalities, we have
	\begin{align*}
		f(X - x + x') - f(X)
		&\ge \frac{1}{s} \left\{ \alpha f(X^*) - (\beta_1 + \beta_2) f(X) \right\}\\
		&= \frac{\beta_1 + \beta_2}{s} \left\{ \frac{\alpha}{\beta_1 + \beta_2} f(X^*) - f(X) \right\},
	\end{align*}
	which implies that the distance from the current solution to $\alpha / (\beta_1 + \beta_2)$ times the optimal value is decreased by the rate $1 - (\beta_1 + \beta_2) / s$ at each iteration.
	Hence, the approximation ratio after $T$ iterations can be bounded as
	\begin{align*}
		f(X)
		&\ge \frac{\alpha}{\beta_1 + \beta_2} \left( 1 - \left( 1 - \frac{\beta_1 + \beta_2}{s} \right)^T \right) f(X^*)\\
		&\ge \frac{\alpha}{\beta_1 + \beta_2} \left( 1 - \exp\left( - \frac{(\beta_1 + \beta_2) T}{s} \right) \right) f(X^*),
	\end{align*}
	which proves the first statement of the theorem.

	Next, we consider the case where the algorithm stops by finding no pair to improve the objective value.
	When the oblivious variant stops, we have $f(X) \ge f(X - x + x')$ for all $x \in X$ and $x' \in N \setminus X$ such that $X - x + x' \in \calI$.
	In the same manner as the above analysis, we obtain
	\begin{align*}
		0 &\ge \sum_{x^* \in X^* \setminus X} \left\{ f(X - \phi(x^*) + x^*) - f(X) \right\} \\
		  &\ge \alpha f(X^*) - (\beta_1 + \beta_2) f(X),
	\end{align*}
	which implies
	\begin{equation*}
		f(X)
		\ge \frac{\alpha}{\beta_1 + \beta_2} f(X^*).
	\end{equation*}
\end{proof}

\subsection{Proof of \Cref{thm:system-anytime}}

\begin{proof}[Proof of \Cref{thm:system-anytime}]
	Let $X$ be the output of the algorithm and $X^*$ an optimal solution.
	From \Cref{lem:multiset-intersection} and \Cref{lem:multiset-exchange} for each case, respectively, we can observe that there exists a multiset $\calP \subseteq 2^N$ and an integer $\eta$ that satisfy the following conditions.
	\begin{enumerate}
		\item For all $P \in \calP$, the symmetric difference is $q$-reachable from $X$, i.e., $X \triangle P \in \calF_q(X)$.
		\item Each element $v \in X^* \setminus X$ appears in exactly $q \eta$ sets in $\calP$.
		\item Each element $v \in X \setminus X^*$ appears in at most $(pq - q + 1) \eta$ sets in $\calP$.
	\end{enumerate}

	Suppose at some iteration the solution is updated from $X$ to $X'$.
	Since $\displaystyle X' \in \argmax_{X' \in \calF_q(X)} f(X')$ and $f(X') - f(X) > 0$, we have
	\begin{align*}
		f(X') - f(X)
		&= \max_{X' \in \calF_q(X)} f(X') - f(X) \\
		&\ge \frac{1}{|\calP|} \sum_{P \in \calP} \left\{ f(X \triangle P) - f(X) \right\}.
	\end{align*}
	Since $f$ is $(\alpha, \beta_1, \beta_2)$-localizable with size $s$ and exchange size $t$, we have
	\begin{equation*}
		\sum_{P \in \calP} \left\{ f(X \triangle P) - f(X) \right\}
		\ge \alpha q \eta f(X^*) - \left( \beta_1 (pq - q + 1) \eta + \beta_2 q \eta \right) f(X).
	\end{equation*}
	By combining these inequalities, we have
	\begin{equation*}
		f(X') - f(X)
		\ge \frac{1}{|\calP|} \left\{ \alpha q \eta f(X^*) - \left( \beta_1 (pq -q + 1) + \beta_2 q\right) \eta f(X) \right\}.
	\end{equation*}
	Since each element in $T \setminus S$ appears in $q\eta$ sets in $\calP$ and $|T \setminus S| \le s$, it holds that $|\calP| \le sq \eta$.
	Hence, we obtain
	\begin{align*}
		f(X') - f(X)
		&\ge \frac{1}{s} \left\{ \alpha f(X^*) - \left( \beta_1 (p - 1 + 1/q) + \beta_2 \right) f(X) \right\}\\
		&= \frac{\beta_1 (p - 1 + 1/q) + \beta_2}{s} \left\{ \frac{\alpha}{\beta_1 (p - 1 + 1/q) + \beta_2 } f(X^*) - f(X) \right\},
	\end{align*}
	which implies that the distance from the current solution to $\frac{\alpha}{\beta_1 (p - 1 + 1/q) + \beta_2}$ times the optimal value decreases by the rate $1 - \frac{\beta_1 (p - 1 + 1/q) + \beta_2}{s}$ at each iteration.
	Therefore, the solution $X$ after $T$ iterations satisfies
	\begin{equation*}
		f(X) \ge \frac{\alpha}{\beta_1 (p - 1 + 1/q) + \beta_2 } \left( 1 - \exp\left( \frac{(\beta_1 (p - 1 + 1/q) + \beta_2 )  T}{s}\right) \right) f(X^*).
	\end{equation*}

	Next, we consider the case where the algorithm stops by finding no pair to improve the objective value.
	In this case, we have $f(X) \ge f(X')$ for all $X' \in \calF_q(X)$.
	In the same manner as the above analysis, we obtain
	\begin{align*}
		0 &\ge \sum_{P \in \calP} \left\{ f(X \triangle P) - f(X) \right\} \\
		&\ge \frac{1}{s} \left\{ \alpha f(X^*) - \left( \beta_1 (p - 1 + 1/q) + \beta_2 \right) f(X) \right\},
	\end{align*}
	which implies
	\begin{equation*}
		f(X) \ge \frac{\alpha}{\beta_1 (p - 1 + 1/q) + \beta_2 } f(X^*).
	\end{equation*}
\end{proof}

\subsection{Proof of \Cref{thm:matroid-sparse}}

\begin{proof}[Proof of \Cref{thm:matroid-sparse}]
	Let $X$ be the output of the algorithm and $X^*$ an optimal solution.
	Suppose at some iteration the solution is updated from $X$ to $X - x + x'$.
	From \Cref{lem:schrijver}, we have a bijection $\phi \colon X^* \setminus X \to X \setminus X^*$ such that $X - \phi(x^*) + x^* \in \calI$ for all $x^* \in X^* \setminus X$.
	Here we show that
	\begin{equation*}
		f(X - x + x') - f(X) \ge \frac{1}{n} \frac{M_{s,2}}{m_{2s}} \left\{ \frac{m_{2s}^2}{M_{s,2}^2} f(X^*) - f(X) \right\}
	\end{equation*}
	holds at each iteration of the semi-oblivious and non-oblivious variants.

	When using the semi-oblivious variant, due to the property of the algorithm, we have
	\begin{equation*}
		(\bfw^{(X)})^2_{\tilde{x}} \ge (\bfw^{(X)})^2_x
	\end{equation*}
	for any $\tilde{x} \in X$ such that $X - \tilde{x} + x' \in \calI$.
	If $\phi_X \colon N \setminus X \to X$ is a map defined as $\phi_X(x') \in \argmin_{x \in X} \{ (\bfw^{(X)})_x^2 \mid X - x + x' \in \calI \}$, then
	\begin{align*}
		&f(X - x + x') - f(X)\\
		&= \max_{x' \in N \setminus X} f(X - \phi_X(x') + x') - f(X)\\
		&\ge \frac{1}{s} \sum_{x^* \in X^* \setminus X} \{ f(X - \phi_X(x^*) + x^*) - f(X) \} \\
		&\ge \frac{1}{s} \sum_{x^* \in X^* \setminus X} \left\{ \frac{1}{2M_{s,2}} \left( \nabla u(\bfw^{(X)}) \right)^2_{x^*} - \frac{M_{s,2}}{2} \left(\bfw^{(X)}\right)_{\phi_X(x^*)}^2 \right\} \tag{From \Cref{lem:feature-smooth}} \\
		&\ge \frac{1}{s} \sum_{x^* \in X^* \setminus X} \left\{ \frac{1}{2M_{s,2}} \left( \nabla u(\bfw^{(X)}) \right)^2_{x^*} - \frac{M_{s,2}}{2} \left(\bfw^{(X)}\right)_{\phi(x^*)}^2 \right\} \tag{since $\left(\bfw^{(X)}\right)_{\phi(x^*)}^2 \ge \left(\bfw^{(X)}\right)_{\phi_X(x^*)}^2$}\\
		&= \frac{1}{s} \left\{ \frac{1}{2M_{s,2}} \left\| \left( \nabla u(\bfw^{(X)}) \right)_{X^* \setminus X} \right\|^2 - \frac{M_{s,2}}{2} \left\| \left(\bfw^{(X)}\right)_{X \setminus X^*} \right\|^2 \right\} \\
		&\ge \frac{1}{s} \left\{ \frac{m_{2s}}{M_{s,2}} f(X^*) - \frac{M_{s,2}}{m_{2s}} f(X) \right\} \tag{From \Cref{lem:feature-concave}},
	\end{align*}
	where we used \Cref{lem:feature-smooth} and \Cref{lem:feature-concave} as in the oblivious case.

	When using the non-oblivious variant, we have
	\begin{align*}
		&f(X - x + x') - f(X)\\
		&\ge \frac{1}{2M_{s,2}} \left( \nabla u(\bfw^{(X)}) \right)^2_{x'} - \frac{M_{s,2}}{2} \left(\bfw^{(X)}\right)_{x}^2 \tag{From \Cref{lem:feature-smooth}} \\
		&= \max_{(x,x') \colon X - x + x' \in \calI} \left\{ \frac{1}{2M_{s,2}} \left( \nabla u(\bfw^{(X)}) \right)^2_{x'} - \frac{M_{s,2}}{2} \left(\bfw^{(X)}\right)_{x}^2 \right\}\\
		&\ge \frac{1}{s} \sum_{x^* \in X^* \setminus X} \left\{ \frac{1}{2M_{s,2}} \left( \nabla u(\bfw^{(X)}) \right)^2_{x^*} - \frac{M_{s,2}}{2} \left(\bfw^{(X)}\right)_{\phi(x^*)}^2 \right\} \\
		&= \frac{1}{s} \left\{ \frac{1}{2M_{s,2}} \left\| \left( \nabla u(\bfw^{(X)}) \right)_{X^* \setminus X} \right\|^2 - \frac{M_{s,2}}{2} \left\| \left(\bfw^{(X)}\right)_{X \setminus X^*} \right\|^2 \right\} \\
		&\ge \frac{1}{s} \left\{ \frac{m_{2s}}{M_{s,2}} f(X^*) - \frac{M_{s,2}}{m_{2s}} f(X) \right\} \tag{From \Cref{lem:feature-concave}}.
	\end{align*}

	Therefore, in semi-oblivious and non-oblivious variants, we have 
	\begin{align*}
		f(X - x + x') - f(X)
		&\ge \frac{1}{s} \left\{ \frac{m_{2s}}{M_{s,2}} f(X^*) - \frac{M_{s,2}}{m_{2s}} f(X) \right\}\\
		&= \frac{1}{s} \frac{M_{s,2}}{m_{2s}} \left\{ \frac{m_{2s}^2}{M_{s,2}^2} f(X^*) - f(X) \right\},
	\end{align*}
	which implies that the distance from the current solution to $m_{2s}^2 / M_{s,2}^2$ times the optimal value decreases by the rate $1 - M_{s,2} / (s m_{2s})$ at each iteration.
	Hence, the approximation ratio after $T$ iterations can be bounded as
	\begin{align*}
		f(X)
		&\ge \frac{m^2_{2s}}{M^2_{s,2}} \left( 1 - \left( 1 - \frac{M_{s,2}}{s m_{2s}} \right)^T \right) f(X^*)\\
		&\ge \frac{m^2_{2s}}{M^2_{s,2}} \left( 1 - \exp\left( - \frac{M_{s,2} T}{s m_{2s}} \right) \right) f(X^*),
	\end{align*}
	which proves the first statement of the theorem.

	Next, we consider the case where the algorithm stops by finding no pair to improve the objective value.
	For the semi-oblivious and non-oblivious variants, we show that
	\begin{equation*}
		0 \ge \frac{m_{2s}}{M_{s,2}} f(X^*) - \frac{M_{s,2}}{m_{2s}} f(X)
	\end{equation*}
	holds when the algorithm stops, from which the second statement of the theorem follows.
	When the semi-oblivious variant stops, we have $f(X) \ge f(X - \phi_X(x') + x')$ for all $x' \in N \setminus X$, where $\phi_X(x')$ is defined in the algorithm.
	Hence, in the same manner as the above analysis, we obtain
	\begin{align*}
		0 &\ge \sum_{x^* \in X^* \setminus X} \left\{ f(X - \phi_X(x^*) + x^*) - f(X) \right\} \\
		&\ge \frac{m_{2s}}{M_{s,2}} f(X^*) - \frac{M_{s,2}}{m_{2s}} f(X).
	\end{align*}
	When the non-oblivious variant stops, we have 
	\begin{equation*}
		0 \ge \frac{1}{2M_{s,2}} \left( \nabla u(\bfw^{(X)}) \right)^2_{x'} - \frac{M_{s,2}}{2} \left(\bfw^{(X)}\right)_{x}^2
	\end{equation*}
	for all $x \in X$ and $x' \in N \setminus X$ such that $X - x + x' \in \calI$.
	Therefore, we have
	\begin{align*}
		0 &\ge \sum_{x^* \in X^* \setminus X} \left\{ \frac{1}{2M_{s,2}} \left( \nabla u(\bfw^{(X)}) \right)^2_{x'} - \frac{M_{s,2}}{2} \left(\bfw^{(X)}\right)_{x}^2 \right\}\\
		&\ge \frac{m_{2s}}{M_{s,2}} f(X^*) - \frac{M_{s,2}}{m_{2s}} f(X)
	\end{align*}
	by using the above analysis for the first statement.
\end{proof}

\subsection{Proof of \Cref{thm:system-sparse}}

\begin{proof}[Proof of \Cref{thm:system-sparse}]
	Let $X$ be the output of the algorithm and $X^*$ an optimal solution.
	Suppose at some iteration the solution is updated from $X$ to $X'$.
	From \Cref{lem:multiset-intersection} and \Cref{lem:multiset-exchange} for each case, respectively, we can observe that there exist a multiset $\calP \subseteq 2^N$ and an integer $\eta$ that satisfy the following conditions.
	\begin{enumerate}
		\item For all $P \in \calP$, the symmetric difference is $q$-reachable from $X$, i.e., $X \triangle P \in \calF_q(X)$.
		\item Each element $v \in X^* \setminus X$ appears in exactly $q \eta$ sets in $\calP$.
		\item Each element $v \in X \setminus X^*$ appears in at most $(pq - q + 1) \eta$ sets in $\calP$.
	\end{enumerate}
	Here we show that
	\begin{equation*}
		f(X') - f(X)
		\ge \frac{1}{s} \left\{ \frac{m_{2s}}{M_{s,t}} f(X^*) - (p - 1 + 1/q) \frac{M_{s,t}}{m_{2s}} f(X) \right\}.
	\end{equation*}
	holds at each iteration of the semi-oblivious and non-oblivious variants.

	When using the semi-oblivious variant, due to the property of the algorithm, we have
	\begin{equation*}
		\| (\bfw^{(X)})_{T} \|^2 \ge \| (\bfw^{(X)})_{X \setminus X'} \|^2
	\end{equation*}
	for any $T \subseteq X$ such that $(X \cup X') \setminus T \in \calF_q(X)$.
	If $\phi_X \colon 2^N \to 2^N$ is a map defined as $\phi_X(S) \in \argmin_{T \colon (X \cup S) \setminus T \in \calF_q(X)} \| (\bfw^{(X)})_T \|^2$, then
	\begin{align*}
		&f(X') - f(X)\\
		&= \max_{X' \in \calF_q(X) \colon \exists S, ~ X' = (X \cup S) \setminus \phi_X (S)} f(X') - f(X)\\
		&\ge \frac{1}{|\calP|} \sum_{P \in \calP} \{ f((X \cup P) \setminus \phi_X(P \setminus X)) - f(X) \} \\
		&\ge \frac{1}{|\calP|} \sum_{P \in \calP} \left\{ \frac{1}{2M_{s,t}} \left\| \left( \nabla u(\bfw^{(X)}) \right)_{P \setminus X} \right\|^2 - \frac{M_{s,t}}{2} \left\| \left( \bfw^{(X)} \right)_{\phi_X(P \setminus X)} \right\|^2 \right\} \tag{From \Cref{lem:feature-smooth}} \\
		&\ge \frac{1}{|\calP|} \sum_{P \in \calP} \left\{ \frac{1}{2M_{s,t}} \left\| \left( \nabla u(\bfw^{(X)}) \right)_{P \setminus X} \right\|^2 - \frac{M_{s,t}}{2} \left\| \left( \bfw^{(X)} \right)_{P \cap X} \right\|^2 \right\} \tag{since $\left\| \left(\bfw^{(X)}\right)_{P \cap X} \right\|^2 \ge \left\| \left(\bfw^{(X)}\right)_{\phi_X(P \setminus X)} \right\|^2$}\\
		&\ge \frac{1}{|\calP|} \left\{ q \eta \frac{1}{2M_{s,t}} \left\| \left( \nabla u(\bfw^{(X)}) \right)_{X^* \setminus X} \right\|^2 - (pq - q + 1) \eta \frac{M_{s,t}}{2} \left\| \left( \bfw^{(X)} \right)_{X \setminus X^*} \right\|^2 \right\} \\
		&\ge \frac{1}{|\calP|} \left\{ q \eta \frac{m_{2s}}{M_{s,t}} f(X^*) - (pq - q + 1) \eta \frac{M_{s,t}}{m_{2s}} f(X) \right\} \tag{From \Cref{lem:feature-concave}} \\
		&\ge \frac{1}{s} \left\{ \frac{m_{2s}}{M_{s,t}} f(X^*) - (p - 1 + 1/q) \frac{M_{s,t}}{m_{2s}} f(X) \right\},
	\end{align*}
	where we used \Cref{lem:feature-smooth} and \Cref{lem:feature-concave} as in the oblivious case.

	When using the non-oblivious variant, we have
	\begin{align*}
		&f(X') - f(X)\\
		&\ge \frac{1}{2M_{s,t}} \left\| \left( \nabla u(\bfw^{(X)}) \right)_{X' \setminus X} \right\|^2 - \frac{M_{s,t}}{2} \left\| \left( \bfw^{(X)} \right)_{X \setminus X'} \right\|^2\\
		&= \max_{X' \in \calF_q(X)} \left\{ \frac{1}{2M_{s,t}} \left\| \left( \nabla u(\bfw^{(X)}) \right)_{X' \setminus X} \right\|^2 - \frac{M_{s,t}}{2} \left\| \left( \bfw^{(X)} \right)_{X \setminus X'} \right\|^2 \right\} \\
		&\ge \frac{1}{|\calP|} \sum_{P \in \calP} \left\{ \frac{1}{2M_{s,t}} \left\| \left( \nabla u(\bfw^{(X)}) \right)_{P \setminus X} \right\|^2 - \frac{M_{s,t}}{2} \left\| \left( \bfw^{(X)} \right)_{P \cap X} \right\|^2 \right\} \\
		&\ge \frac{1}{|\calP|} \left\{ q \eta \frac{1}{2M_{s,t}} \left\| \left( \nabla u(\bfw^{(X)}) \right)_{X^* \setminus X} \right\|^2 - (pq - q + 1) \eta \frac{M_{s,t}}{2} \left\| \left( \bfw^{(X)} \right)_{X \setminus X^*} \right\|^2 \right\} \\
		&\ge \frac{1}{|\calP|} \left\{ q \eta \frac{m_{2s}}{M_{s,t}} f(X^*) - (pq - q + 1) \eta \frac{M_{s,t}}{m_{2s}} f(X) \right\} \\
		&\ge \frac{1}{s} \left\{ \frac{m_{2s}}{M_{s,t}} f(X^*) - (p - 1 + 1/q) \frac{M_{s,t}}{m_{2s}} f(X) \right\},
	\end{align*}
	where we used $|\calP| \le sq \eta$ in the last inequality.
	Therefore, in semi-oblivious and non-oblivious variants, we have 
	\begin{equation*}
		f(X') - f(X)
		\ge (p - 1 + 1/q) \frac{M_{s,t}}{s m_{2s}} \left\{ \frac{1}{p - 1 + 1/q} \frac{m_{2s}^2}{M_{s,t}^2} f(X^*) -  f(X) \right\}.
	\end{equation*}
	which implies that the distance from the current solution to $\frac{1}{p - 1 + 1/q} \frac{m_{2s}^2}{M_{s,2}^2}$ times the optimal value decreases by the rate $1 - (p-1+1/q)m_{2s} / (sM_{s,2})$ at each iteration.
	Hence, the approximation ratio after $T$ iterations can be bounded as
	\begin{align*}
		f(X)
		&\ge \frac{1}{p - 1 + 1/q} \frac{m^2_{2s}}{M^2_{s,2}} \left( 1 - \left( 1 - \frac{(p - 1 + 1/q) M_{s,t}}{s m_{2s}} \right)^T \right) f(X^*)\\
		&\ge \frac{1}{p - 1 + 1/q} \frac{m^2_{2s}}{M^2_{s,2}} \left( 1 - \exp\left( -  \frac{(p - 1 + 1/q) M_{s,2} T}{s m_{2s}} \right) \right) f(X^*),
	\end{align*}
	which proves the first statement of the theorem.

	Next, we consider the case where the algorithm stops by finding no pair to improve the objective value.
	For the semi-oblivious and non-oblivious variants, we show that
	\begin{equation*}
		0 \ge q \eta \frac{m_{2s}}{M_{s,t}} f(X^*) - (pq - q + 1) \eta \frac{M_{s,t}}{m_{2s}} f(X)
	\end{equation*}
	holds when the algorithm stops, from which the second statement of the theorem follows.
	When the semi-oblivious variant stops, we have $f(X) \ge f(X - \phi_X(x') + x')$ for all $x' \in N \setminus X$, where $\phi_X(x')$ is defined in the algorithm.
	Hence, in the same manner as the above analysis, we obtain
	\begin{align*}
		0 &\ge \sum_{P \in \calP} \left\{ f((X \cup P) \setminus \phi_X(P \setminus X)) - f(X) \right\} \\
		&\ge q \eta \frac{m_{2s}}{M_{s,t}} f(X^*) - (pq - q + 1) \eta \frac{M_{s,t}}{m_{2s}} f(X).
	\end{align*}
	When the non-oblivious variant stops, we have 
	\begin{equation*}
		0 \ge \frac{1}{2M_{s,t}} \left\| \left( \nabla u(\bfw^{(X)}) \right)_{X' \setminus X} \right\|^2 - \frac{M_{s,t}}{2} \left\| \left( \bfw^{(X)} \right)_{X \setminus X'} \right\|^2 \\
	\end{equation*}
	for all $X' \in \calF_q(X)$.
	Therefore, we have
	\begin{align*}
		0 &\ge \sum_{P \in \calP} \left\{ \frac{1}{2M_{s,t}} \left\| \left( \nabla u(\bfw^{(X)}) \right)_{P \setminus X} \right\|^2 - \frac{M_{s,t}}{2} \left\| \left( \bfw^{(X)} \right)_{P \cap X} \right\|^2 \right\}\\
		&\ge q \eta \frac{m_{2s}}{M_{s,t}} f(X^*) - (pq - q + 1) \eta \frac{M_{s,t}}{m_{2s}} f(X).
	\end{align*}
	by using the above analysis for the first statement.
\end{proof}

\section{Variants with Geometric Improvement}\label{sec:local-geometric}
In this section, we analyze variants of our proposed local search algorithms for sparse optimization under a single matroid constraint.
These variants increase the objective value by at least $1 + \epsilon$ times at each iteration, where $\epsilon > 0$ is a prescribed rate.

To analyze the singleton with the largest objective, which is used as an initial solution for the variants in this section, we use the following fact.
\begin{lemma}\label{lem:initial}
	Suppose $u \colon 2^N \to \bbR$ is a continuously differentiable function with $u(\bfzero) \ge 0$.
	Assume $u$ is restricted strong concave on $\Omega_{s}$ and restricted smooth on $\Omega_{1}$.
	Define $f \colon 2^N \to \bbR$ by $f(X) = \max_{\supp(\bfw) \subseteq X} u(\bfw)$.
	Let $x^* \in \argmax \{ f(x) \mid x \in N \}$.
	We have $f(\{x^*\}) \ge \frac{m_s}{s M_1} f(X)$ for any $X \in \calI$, where $s = \max\{|X| \mid X \in \calI\}$.
\end{lemma}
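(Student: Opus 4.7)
The plan is to compare two one-sided bounds on the gradient $\nabla u(\bfzero)$ restricted to $X$: a lower bound coming from restricted smoothness applied at singletons (which controls $f(\{x^*\})$ from below), and an upper bound coming from restricted strong concavity on $\Omega_s$ (which controls $f(X)$ from above).

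First, for each $x \in N$, apply restricted smoothness on $\Omega_1$ to the pair $(\bfzero, w \cdot \bfe_x)$ and optimize the resulting quadratic in $w$; this yields $f(\{x\}) - u(\bfzero) \ge (\nabla u(\bfzero))_x^2 / (2M_1)$. Taking the element $\tilde x \in \argmax_{x \in N}(\nabla u(\bfzero))_x^2$ and using $f(\{x^*\}) \ge f(\{\tilde x\})$ together with the pigeonhole bound $\max_{x \in X}(\nabla u(\bfzero))_x^2 \ge \|(\nabla u(\bfzero))_X\|^2 / |X| \ge \|(\nabla u(\bfzero))_X\|^2 / s$ gives
\[
 f(\{x^*\}) - u(\bfzero) \;\ge\; \frac{\|(\nabla u(\bfzero))_X\|^2}{2 s M_1}.
\]

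Next, apply restricted strong concavity on $\Omega_s$ to the pair $(\bfzero, \bfy)$ for $\bfy$ with $\supp(\bfy) \subseteq X$ (valid since $|X| \le s$), obtaining $u(\bfy) - u(\bfzero) \le \langle \nabla u(\bfzero), \bfy\rangle - (m_s/2)\|\bfy\|^2$. Maximizing the right-hand side over all such $\bfy$ yields
\[
 f(X) - u(\bfzero) \;\le\; \frac{\|(\nabla u(\bfzero))_X\|^2}{2 m_s}.
\]

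Combining the two displays eliminates the gradient term and gives $f(\{x^*\}) - u(\bfzero) \ge (m_s / (sM_1)) (f(X) - u(\bfzero))$. Since $\Omega_1 \subseteq \Omega_s$ forces $m_s \le m_1 \le M_1 \le sM_1$, the coefficient $m_s/(sM_1)$ lies in $[0,1]$; combined with $u(\bfzero) \ge 0$, dropping the nonnegative slack $(1 - m_s/(sM_1)) u(\bfzero)$ yields the desired $f(\{x^*\}) \ge (m_s/(sM_1)) f(X)$. The only nontrivial step is verifying that $u(\bfzero) \ge 0$ and $m_s/(sM_1) \le 1$ cooperate to let us drop the intercept, but both hold under the stated assumptions.
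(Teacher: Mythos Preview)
Your proof is correct and follows essentially the same approach as the paper: bound $f(\{x^*\})$ from below via restricted smoothness at singletons, bound $f(X)$ from above via restricted strong concavity, and eliminate the gradient norm. The only cosmetic difference is that the paper applies the averaging step to the values $f(\{x\})$ over $x \in X$ before invoking smoothness, whereas you apply pigeonhole directly to the gradient components $(\nabla u(\bfzero))_x^2$; you also make explicit the inequality $m_s/(sM_1)\le 1$ needed to drop the $u(\bfzero)$ term, which the paper uses silently.
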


\begin{proof}
	From the restricted smoothness of $u$, we have
	\begin{equation*}
		u(c_x \bfe_x) 
		\ge u(\bfzero) + \langle \nabla u (\bfzero), c_x \bfe_x \rangle - \frac{M_1}{2} c_x^2
	\end{equation*}
	for any $x \in N$ and $c_x \in \bbR$.
	From \Cref{lem:feature-concave}, we have
	\begin{equation*}
		f(X) - f(\emptyset)
		\le \frac{1}{2m_s} \left\| \left( \nabla u(\bfzero) \right)_X \right\|^2
	\end{equation*}
	for any $X \in \calI$.
	By utilizing these inequalities, for any $X \in \calI$, we have
	\begin{align*}
		\max_{x \in N} f(\{x\})
		&\ge f(\emptyset) + \frac{1}{s} \sum_{x \in X} f(x|\emptyset) \\
		&= f(\emptyset) + \frac{1}{s} \sum_{x \in X} \max_{c_x \in \bbR} \left\{ u(c_x \bfe_x) - u(\bfzero) \right\}\\
		&\ge f(\emptyset) + \frac{1}{s} \sum_{x \in X} \max_{c_x \in \bbR} \left\{ \langle \nabla u (\bfzero), c_x \bfe_x \rangle - \frac{M_1}{2} c_x^2 \right\}\\
		&= f(\emptyset) + \frac{1}{s} \frac{\|  (\nabla u(\bfzero))_X \|^2}{2M_1}\\
		&\ge f(\emptyset) + \frac{m_s}{s M_1} (f(X) - f(\emptyset))\\
		&\ge \frac{m_s}{sM_1} f(X),
	\end{align*}
	which concludes the statement.
\end{proof}

Here we introduce other variants of local search algorithms that use a different type of criteria for finding a pair $(x, x')$ to improve the solution.
These new variants use any pair that increases some function by the rate $(1 + \delta)$, while the previously introduced variants find the pair that yields the largest improvement of some function.
We consider three variants, the oblivious, semi-oblivious, and non-oblivious, similarly to the previous ones.
The oblivious variant searches for any pair $(x, x')$ that increases the objective function by the rate $(1 + \delta)$, that is, $f(X - x + x') \ge (1 + \delta) f(X)$.
The semi-oblivious variant constructs a map $\phi_X \colon N \setminus X \to X$ that satisfies $\phi_X(x') \in \argmin_{x \in X \colon X - x + x' \in \calI} (\bfw^{(X)})_x^2$ and searches for $x' \in N \setminus X$ with $f(x - \phi_X(x') + x') \ge (1 + \delta) f(X)$.
The non-oblivious variant searches for any $(x, x')$ that satisfies
\begin{equation*}
	\frac{1}{2M_{s,2}} \left( \nabla u(\bfw^{(X)}) \right)^2_{x'} - \frac{M_{s,2}}{2} \left(\bfw^{(X)}\right)_{x}^2 \ge \delta f(X).
\end{equation*}
All variants stop when they do not find any solution that satisfies the criteria.
The detailed description of these algorithms is given in \Cref{alg:matroid-delta}.

\begin{algorithm}[t]
	\caption{Local search algorithms for a matroid constraint with geometric improvement}\label{alg:matroid-delta}
	\begin{algorithmic}[1]
		\STATE Let $\delta \gets \epsilon / n$.
		\STATE Let $X \gets \argmax \{ f(v) \mid v \in N \}$.
		\STATE Add arbitrary elements to $X$ until $X$ is maximal in $\calI$.
		\LOOP
			\STATE Search for a pair of $x \in X$ and $x' \in N \setminus X$ such that $X - x + x' \in \calI$ and\\
			$\begin{cases}
				f(X - x + x') \ge (1 + \delta) f(X) & (\text{oblivious})\\
				\displaystyle \text{$f(X - \phi_X(x') + x') \ge (1 + \delta) f(X)$ and $x = \phi_X(x')$,}\\
				\quad \displaystyle \text{where $\phi_X \colon N \setminus X \to X$ is a map that satisfies}\\
				\quad \phi_X(x') \in \argmin_{x \in X \colon X - x + x' \in \calI} (\bfw^{(X)})_x^2 & (\text{semi-oblivious})\\
				\displaystyle \frac{1}{2M_{s,2}} \left( \nabla u(\bfw^{(X)}) \right)^2_{x'} - \frac{M_{s,2}}{2} \left(\bfw^{(X)}\right)_{x}^2 \ge \delta f(X) & (\text{non-oblivious})
			\end{cases}
				$
			\IF{$\exists (x, x')$ satisfying the above condition}
				\STATE Let $X \gets X - x + x'$.
			\ELSE
				\STATE \textbf{return} $X$.
			\ENDIF
		\ENDLOOP
	\end{algorithmic}
\end{algorithm}

We can provide bounds on the approximation ratio of these variants as follows.
\begin{theorem}\label{thm:matroid-geometric}
	Suppose $f(X) = \max_{\supp(\bfw) \subseteq X} u(\bfw)$ and $\calI$ is the independence set family of a matroid.
	Then any of the oblivious, semi-oblivious, and non-oblivious variants of \Cref{alg:matroid-delta} stops after at most $\rmO (\frac{n}{\epsilon} \ln (\frac{s M_1}{m_s}))$ iterations and returns an output $X$ that satisfies
	\begin{equation*}
		f(X) \ge \left( \frac{m_{2s}^2}{M_{s,2}^2} - \epsilon \right) f(X^*),
	\end{equation*}
	where $X^*$ is an optimal solution and $s = \max \{|X| \colon X \in \calI \}$ is the rank of the matroid.
\end{theorem}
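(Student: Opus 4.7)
The plan splits into bounding the iteration count and bounding the approximation ratio at termination; both steps will follow by grafting a $\delta f(X)$ slack onto the analysis already developed in \Cref{thm:matroid-sparse}.

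First, I would bound the iteration count. Each successful iteration multiplies $f(X)$ by at least $1 + \delta$: this is built into the criterion for the oblivious and semi-oblivious variants, while for the non-oblivious variant it follows from \Cref{lem:feature-smooth} because
\begin{equation*}
    f(X - x + x') - f(X) \ge \frac{1}{2M_{s,2}} \left(\nabla u(\bfw^{(X)})\right)^2_{x'} - \frac{M_{s,2}}{2} \left(\bfw^{(X)}\right)^2_{x} \ge \delta f(X).
\end{equation*}
Since we initialize $X$ by extending a best singleton to a basis, monotonicity of $f$ together with \Cref{lem:initial} yields $f(X_0) \ge \frac{m_s}{s M_1} f(X^*)$. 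Because every iterate stays feasible, $f(X_T) \le f(X^*)$, so $(1+\delta)^T \le s M_1 / m_s$, giving $T = \rmO(\delta^{-1} \ln(s M_1/m_s)) = \rmO(\frac{n}{\epsilon} \ln(\frac{s M_1}{m_s}))$ as claimed.

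Next, I turn to the approximation ratio at termination. Let $X$ be the returned solution and use \Cref{lem:schrijver} to fix a bijection $\phi \colon X^* \setminus X \to X \setminus X^*$ with $X - \phi(x^*) + x^* \in \calI$. The key step will be to show that summing the failed termination criterion over these $|X^* \setminus X| \le s$ pairs produces
\begin{equation*}
    \frac{1}{2M_{s,2}} \left\| \left(\nabla u(\bfw^{(X)})\right)_{X^* \setminus X} \right\|^2 - \frac{M_{s,2}}{2} \left\| \left(\bfw^{(X)}\right)_{X \setminus X^*} \right\|^2 \le s \delta f(X).
\end{equation*}
For the non-oblivious variant this is immediate from the failure criterion; for the oblivious variant I first invoke \Cref{lem:feature-smooth} to convert each ``no pair with $f$-improvement $\ge \delta f(X)$'' into this surrogate inequality; and for the semi-oblivious variant I additionally use the replacement $(\bfw^{(X)})^2_{\phi_X(x^*)} \le (\bfw^{(X)})^2_{\phi(x^*)}$ exactly as in the proof of \Cref{thm:matroid-sparse}. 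Applying \Cref{lem:feature-concave} to the left-hand side turns this into $\frac{m_{2s}}{M_{s,2}} f(X^*) - \frac{M_{s,2}}{m_{2s}} f(X) \le s \delta f(X)$, and rearranging together with $1/(1+x) \ge 1 - x$ gives
\begin{equation*}
    f(X) \ge \frac{m_{2s}^2/M_{s,2}^2}{1 + s \delta\, m_{2s}/M_{s,2}} f(X^*) \ge \left( \frac{m_{2s}^2}{M_{s,2}^2} - s \delta \cdot \frac{m_{2s}^3}{M_{s,2}^3} \right) f(X^*).
\end{equation*}
Plugging in $\delta = \epsilon/n$ and using $s \le n$ together with $m_{2s} \le M_{s,2}$ (which holds because any $(\bfx, \bfy) \in \Omega_{s,2}$ also lies in $\Omega_{2s}$, forcing the concavity and smoothness bounds to agree in sign) will yield $f(X) \ge (m_{2s}^2/M_{s,2}^2 - \epsilon) f(X^*)$.

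The main subtlety will be unifying the three variants under a single argument: each uses a different termination criterion, but \Cref{lem:feature-smooth} and the minimality property of $\phi_X$ allow the oblivious and semi-oblivious certificates to be downgraded to the same non-oblivious-style surrogate inequality, after which the matroid-exchange summation is essentially mechanical. The remaining work is the bookkeeping that converts the multiplicative $1/(1 + s\delta\, m_{2s}/M_{s,2})$ factor into the clean additive $\epsilon$ loss stated in the theorem.
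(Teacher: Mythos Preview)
Your proposal is correct and follows essentially the same approach as the paper's proof: both bound the iteration count via geometric growth from the initial singleton (\Cref{lem:initial}), and both reduce the termination condition of each variant to the surrogate inequality via \Cref{lem:feature-smooth} and the minimality of $\phi_X$, then sum over the matroid exchange bijection from \Cref{lem:schrijver} and apply \Cref{lem:feature-concave}. The only cosmetic difference is that the paper bounds the number of summands by $n$ (so that $\delta n = \epsilon$ directly), whereas you bound it by $s$ and then use $s \le n$ and $m_{2s} \le M_{s,2}$ at the end; both routes yield the same final inequality.
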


\begin{proof}
	Let $X$ be the output of the algorithm.
	Let $X^*$ be an optimal solution.
	From \Cref{lem:schrijver}, we have a bijection $\phi \colon X^* \setminus X \to X \setminus X^*$ such that $X - \phi(x^*) + x^* \in \calI$ for all $x^* \in X^* \setminus X$.
	For each of three variants, we prove
	\begin{equation*}
		0 \ge \sum_{x^* \in X^* \setminus X} \left\{ \frac{1}{2M_{s,2}} \left( \nabla u(\bfw^{(X)}) \right)^2_{x^*} - \frac{M_{s,2}}{2} \left(\bfw^{(X)}\right)_{\phi(x^*)}^2 - \delta f(X) \right\},
	\end{equation*}
	which implies
	\begin{align*}
		0
		&\ge \frac{1}{2M_{s,2}} \left\| \left( \nabla u(\bfw^{(X)}) \right)_{X^*\setminus X} \right\|^2 - \frac{M_{s,2}}{2} \left\| \left(\bfw^{(X)}\right)_{X \setminus X^*} \right\|^2 - \delta n f(X) \\
		&\ge \frac{m_{2s}}{M_{s,2}} f(X^*) - \left(\frac{M_{s,2}}{m_{2s}} + \delta n \right) f(X),
	\end{align*}
	where the second inequality is due to \Cref{lem:feature-concave}.
	Since we set $\delta = \epsilon / n$, we obtain
	\begin{equation*}
		f(X) \ge \left( \frac{m^2_{2s}}{M^2_{s,2}} - \epsilon \right) f(X).
	\end{equation*}

	In the case of the oblivious variant, since $f(X - x + x') \le (1 + \delta) f(X)$ for all $x \in X$ and $x' \in N \setminus X$, we have
	\begin{align*}
		0
		&\ge \sum_{x^* \in X^* \setminus X} \left\{ f(X - \phi(x^*) + x^*) - (1 + \delta) f(X) \right\} \\
		&\ge \sum_{x^* \in X^* \setminus X} \left\{ \frac{1}{2M_{s,2}} \left( \nabla u(\bfw^{(X)}) \right)^2_{x^*} - \frac{M_{s,2}}{2} \left(\bfw^{(X)}\right)_{\phi(x^*)}^2- \delta f(X) \right\}
	\end{align*}
	similarly to the proof of \Cref{thm:matroid-anytime}.
	In the case of the semi-oblivious variant, since $f(X - \phi_X(x') + x')$ for any $x' \in N \setminus X$, we have
	\begin{align*}
		0 &\ge \sum_{x^* \in X^* \setminus X} \left\{ f(X - \phi_X(x^*) + x^*) - (1 + \delta) f(X) \right\} \\
		&\ge \sum_{x^* \in X^* \setminus X} \left\{ \frac{1}{2M_{s,2}} \left( \nabla u(\bfw^{(X)}) \right)^2_{x^*} - \frac{M_{s,2}}{2} \left(\bfw^{(X)}\right)_{\phi_X(x^*)}^2 - \delta f(X) \right\} \\
		&\ge \sum_{x^* \in X^* \setminus X} \left\{ \frac{1}{2M_{s,2}} \left( \nabla u(\bfw^{(X)}) \right)^2_{x^*} - \frac{M_{s,2}}{2} \left(\bfw^{(X)}\right)_{\phi(x^*)}^2 - \delta f(X) \right\}.
	\end{align*}
	When we use the non-oblivious variant, since
	\begin{equation*}
		0 \ge \frac{1}{2M_{s,2}} \left( \nabla u(\bfw^{(X)}) \right)^2_{x^*} - \frac{M_{s,2}}{2} \left(\bfw^{(X)}\right)_{\phi(x^*)}^2 - \delta f(X)
	\end{equation*}
	for all $x^* \in X^* \setminus X$, we obtain
	\begin{equation*}
		0
		\ge \sum_{x^* \in X^* \setminus X} \left\{ \frac{1}{2M_{s,2}} \left( \nabla u(\bfw^{(X)}) \right)^2_{x^*} - \frac{M_{s,2}}{2} \left(\bfw^{(X)}\right)_{\phi(x^*)}^2 - \delta f(X) \right\}.
	\end{equation*}

	Finally, we bound the number of iterations.
	At each iteration, the objective value is improved at least at a rate of $(1 + \delta)$.
	From \Cref{lem:initial}, the initial solution is $\frac{m_s}{s M_1}$-approximation.
	Therefore, the number of iterations is at most $\log_{1+\delta} (\frac{s M_1}{m_s}) = \rmO (\frac{n}{\epsilon} \ln (\frac{s M_1}{m_s}))$.
\end{proof}

To obtain the same bound by using \Cref{thm:matroid-sparse} for \Cref{alg:matroid-anytime}, the number of iterations $T$ is required to be larger than
\begin{equation*}
	T = \frac{s M_{s,2}}{m_{2s}} \log \left( \frac{m_{2s}^2}{\epsilon M_{s,2}^2} \right),
\end{equation*}
which can be larger than \Cref{alg:matroid-delta} in some cases and smaller in other cases.

\begin{remark}
In the same manner, we can devise local search algorithms with geometric improvement for $p$-matroid intersection and $p$-exchange system constraints.
Since they are a straightforward combination of techniques in \Cref{thm:system-anytime} and this section, we omit the description.
\end{remark}

\section{Modular Approximation for Sparse Optimization}\label{sec:modular-approximation}
Modular approximation is a generic method for nonlinear optimization.
This method maximizes a linear function that approximates the original objective function instead of maximizing the original function.
If we can exactly or approximately solve linear function optimization under the sparsity constraint, we can bound the approximation ratio by the restricted strong concavity and restricted smoothness constants.
While we can provide approximation ratio bounds for modular approximation, the empirical performance of modular approximation is mostly poor since it completely ignores correlations between variables in the objective function.

\begin{algorithm}[t]
	\caption{Modular approximation}\label{alg:modular-approximation}
	\begin{algorithmic}[1]
		\STATE Apply the $\alpha$-approximation algorithm to
			\begin{align*}
				\text{Maximize} & \quad \tilde{f}(X) = f(\emptyset) + \sum_{x \in X} f(x | \emptyset)\\
				\text{subject to} & \quad X \in \calI
			\end{align*}
			and obtain the output $X$.
		\STATE \textbf{return} $X$.
	\end{algorithmic}
\end{algorithm}

\begin{proposition}
	Suppose $f(X) = \max_{\supp(\bfw) \subseteq X} u(\bfw)$.
	Assume we use an $\alpha$-approximation algorithm for maximizing a linear function under constraint $\calI$ as a subroutine.
	Modular approximation is $\alpha \frac{m_1 m_s}{M_1 M_s}$-approximation for sparse optimization with constraint $X \in \calI$, where $s = \max \{ |X| \colon X \in \calI \}$.
\end{proposition}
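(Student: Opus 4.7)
The plan is to bound $f$ and $\tilde{f}$ against each other at both the output $X$ of the algorithm and the true optimum $X^*$, and then chain the comparisons through the $\alpha$-approximation guarantee for the linear subproblem. By the normalization noted in the problem setup, I will assume $u(\bfzero) = 0$, so that $f(\emptyset) = 0$ and $\tilde{f}(X) = \sum_{x \in X} f(\{x\})$; this substitution is without loss of generality and preserves $m_s, M_s, \nabla u(\bfzero)$.

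The first step is to produce a one-sided comparison $f(Y) \ge \tfrac{m_1}{M_s}\tilde{f}(Y)$ valid for any $Y \in \calI$. For the lower bound on $f(Y)$, I would apply restricted smoothness with parameter $M_s$ at $\bfzero$ to the candidate $\bfw$ with $\supp(\bfw) \subseteq Y$ and optimize over $\bfw$, exactly as in the proof of \Cref{lem:initial}; this yields $f(Y) \ge \|(\nabla u(\bfzero))_Y\|^2 / (2M_s)$. For the upper bound on $\tilde{f}(Y)$, I would apply restricted strong concavity with parameter $m_1$ at $\bfzero$ along each coordinate direction $\bfe_x$ for $x \in Y$, giving $f(\{x\}) \le (\nabla u(\bfzero))_x^2/(2m_1)$; summing over $x \in Y$ controls $\tilde{f}(Y)$ by the same gradient norm. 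Dividing gives the claimed ratio.

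The second step is the symmetric comparison $f(X^*) \le \tfrac{M_1}{m_s}\tilde{f}(X^*)$, obtained analogously by swapping the roles: restricted strong concavity at size $s$ gives $f(X^*) \le \|(\nabla u(\bfzero))_{X^*}\|^2/(2m_s)$, while restricted smoothness at size $1$ gives $f(\{x\}) \ge (\nabla u(\bfzero))_x^2/(2M_1)$ for each $x \in X^*$, hence a lower bound on $\tilde{f}(X^*)$ by the same gradient norm.

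The final step is to combine these with the $\alpha$-approximation property of the linear subroutine. Let $X^\sharp$ denote the true maximizer of $\tilde{f}$ over $\calI$; the subroutine returns $X$ with $\tilde{f}(X) \ge \alpha\,\tilde{f}(X^\sharp) \ge \alpha\,\tilde{f}(X^*)$, since $X^* \in \calI$. Chaining,
\[
f(X) \;\ge\; \tfrac{m_1}{M_s}\,\tilde{f}(X) \;\ge\; \tfrac{\alpha m_1}{M_s}\,\tilde{f}(X^*) \;\ge\; \tfrac{\alpha m_1 m_s}{M_1 M_s}\,f(X^*),
\]
which is exactly the stated bound. There is no real obstacle here; the only subtle point is ensuring that the normalization $f(\emptyset) = 0$ is used consistently so that the $\alpha$-approximation for $\tilde{f}$ translates directly into one for the purely linear part $\sum_{x \in X} f(\{x\})$.
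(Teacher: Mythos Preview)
Your proposal is correct and follows essentially the same route as the paper's proof: both sandwich $f$ and $\tilde{f}$ against the common quantity $\|(\nabla u(\bfzero))_Y\|^2$ using restricted smoothness/strong concavity at sizes $1$ and $s$, and then chain through the $\alpha$-approximation guarantee of the linear subroutine. The only cosmetic difference is that the paper keeps the assumption $u(\bfzero)\ge 0$ and carries the constant $f(\emptyset)$ through, whereas you normalize to $u(\bfzero)=0$; this is harmless because the ratio $\alpha\,m_1 m_s/(M_1 M_s)\le 1$, so an approximation guarantee for the shifted $f$ immediately implies the same guarantee for the original.
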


\begin{proof}
	We consider a set function $\tilde{f} \colon 2^N \to \bbR$ defined by
	\begin{equation*}
		\tilde{f}(X) = f(\emptyset) + \sum_{x \in X} f(x | \emptyset)
	\end{equation*}
	for each $X \subseteq N$ to be a modular approximation of $f$.
	From the restricted strong concavity and restricted smoothness of $u$, we have
	\begin{equation}\label{eq:ma-singleton}
		\langle \nabla u (\bfzero), c_i \bfe_i \rangle - \frac{M_1}{2} c_i^2
		\le u(c_i \bfe_i) - u(\bfzero)
		\le \langle \nabla u (\bfzero), c_i \bfe_i \rangle - \frac{m_1}{2} c_i^2
	\end{equation}
	for any $i \in N$ and $c_i \in \bbR$.
	From \Cref{lem:feature-smooth} and \Cref{lem:feature-concave}, we have
	\begin{equation}\label{eq:ma-optimal}
		\frac{1}{2M_s} \left\| \left( \nabla u(\bfzero) \right)_X \right\|^2
		\le f(X) - f(\emptyset)
		\le \frac{1}{2m_s} \left\| \left( \nabla u(\bfzero) \right)_X \right\|^2
	\end{equation}
	for any $X \in \calI$.
	By using \eqref{eq:ma-singleton} and \eqref{eq:ma-optimal}, for any $X \in \calI$, we obtain
	\begin{align*}
	\tilde{f}(X)
	&= u(\bfzero) + \sum_{i \in X} \max_{c_i \in \bbR} \left\{ u(c_i \bfe_i) - u(\bfzero) \right\} \\
	&\le u(\bfzero) + \sum_{i \in X} \max_{c_i \in \bbR} \left\{ \langle \nabla u (\bfzero), c_i \bfe_i \rangle - \frac{m_1}{2} c_i^2 \right\} \tag{from \eqref{eq:ma-singleton}}\\
	&= u(\bfzero) + \sum_{i \in X} \frac{(\nabla u(\bfzero))_i^2}{2m_1}\\
	&\le \frac{M_s}{m_1} f(X) \tag{from \eqref{eq:ma-optimal}},
	\end{align*}
	where we use $u(\bfzero) \ge 0$ for the last inequality.
	Similarly, for any $X \in \calI$, we obtain
	\begin{align*}
		\tilde{f}(X)
		&= u(\bfzero) + \sum_{i \in X} \max_{c_i \in \bbR} \left\{ u(c_i\bfe_i) - u(\bfzero) \right\}\\
		&\ge u(\bfzero) + \sum_{i \in X} \max_{c_i \in \bbR} \left\{ \langle \nabla u (\bfzero), c_i \bfe_i \rangle - \frac{M_1}{2} c_i^2 \right\} \tag{from \eqref{eq:ma-singleton}}\\
		&= u(\bfzero) + \sum_{i \in X} \frac{(\nabla u(\bfzero))_i^2}{2M_1}\\
		&\ge \frac{m_s}{M_1} f(X) \tag{from \eqref{eq:ma-optimal}},
	\end{align*}
	where we use $u(\bfzero) \ge 0$ for the last inequality.
	Let $X_{\textrm{MA}}$ be the output of the $\alpha$-approximation algorithm applied to maximizing $\tilde{f}(X)$ subject to $X \in \calI$.
	Then we have
	\begin{equation*}
		\tilde{f}(X_{\textrm{MA}}) \ge \alpha \tilde{f}(X^*),
	\end{equation*}
	where $X^* \in \argmax_{X \in \calI} f(X)$.
	Finally, we have
	\begin{equation*}
		f(X_{\textrm{MA}}) \ge \frac{m_1}{M_s} \tilde{f}(X_{\textrm{MA}}) \ge \alpha \frac{m_1}{M_s} \tilde{f}(X^*) \ge \alpha \frac{m_1 m_s}{M_1 M_s} f(X^*).
	\end{equation*}
\end{proof}

Since there exists an exact greedy algorithm for maximizing a linear function over a matroid constraint and $(1 / (p - 1 + 1/q) - \epsilon)$-approximation local search algorithms for a $p$-matroid intersection constraint~\citep{LSV10} or $p$-exchange system constraint~\citep{FNSW11}, we obtain the following approximation ratio bounds.

\begin{corollary}
	Modular approximation with the greedy algorithm is $\frac{m_1 m_s}{M_1 M_s}$-approximation for a matroid constraint.
\end{corollary}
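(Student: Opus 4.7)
The plan is to obtain this corollary as an immediate instantiation of the preceding proposition, which shows that modular approximation is $\alpha\frac{m_1 m_s}{M_1 M_s}$-approximation whenever the linear maximization subroutine achieves an $\alpha$-approximation. Since the constraint is a single matroid, it suffices to exhibit a linear-maximization subroutine that is exact, i.e., achieves $\alpha=1$, and then substitute into the bound.

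First I would recall the classical result of Edmonds (or Rado) that the standard greedy algorithm, which sorts the ground-set elements by decreasing weight $f(x\mid\emptyset)$ and inserts each element whenever the current set remains independent, computes an exact maximum-weight base of any matroid. Applied to the modular surrogate $\tilde f(X) = f(\emptyset) + \sum_{x\in X} f(x\mid\emptyset)$, this yields a set $X_{\mathrm{MA}}\in\calI$ with $\tilde f(X_{\mathrm{MA}})\ge \tilde f(X)$ for every $X\in\calI$, so the subroutine has approximation ratio $\alpha=1$.

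Plugging $\alpha=1$ into the preceding proposition on modular approximation gives $f(X_{\mathrm{MA}}) \ge \frac{m_1 m_s}{M_1 M_s}\,f(X^*)$ for any optimal $X^*\in\calI$, which is exactly the claimed bound. There is no serious obstacle here; the only ingredient beyond the cited proposition is the well-known exactness of the matroid greedy algorithm on linear objectives, which I would simply quote rather than reprove.
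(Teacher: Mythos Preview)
Your proposal is correct and matches the paper's own reasoning: the paper simply observes that the greedy algorithm exactly maximizes a linear objective over a matroid constraint, so $\alpha=1$, and invokes the preceding proposition to obtain the $\frac{m_1 m_s}{M_1 M_s}$ bound.
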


\begin{corollary}
	Modular approximation with local search algorithms is $(\frac{1}{p - 1 + 1/q} \frac{m_1 m_s}{M_1 M_s} - \epsilon)$-approximation for a $p$-matroid intersection or $p$-exchange system constraint.
\end{corollary}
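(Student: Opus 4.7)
The plan is to derive this corollary as an immediate consequence of the preceding proposition, which shows that modular approximation with any $\alpha$-approximation subroutine for linear maximization achieves approximation ratio $\alpha \frac{m_1 m_s}{M_1 M_s}$ for sparse optimization. Thus the only task is to exhibit a suitable $\alpha$-approximation algorithm for maximizing a modular (linear) function subject to a $p$-matroid intersection constraint or a $p$-exchange system constraint.

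First, I would observe that the cited works of \citet{LSV10} and \citet{FNSW11} provide local search algorithms that, for any $\epsilon' > 0$, achieve an approximation ratio of $\frac{1}{p - 1 + 1/q} - \epsilon'$ for maximizing a non-negative monotone submodular function over a $p$-matroid intersection constraint or $p$-exchange system constraint, respectively. Since linear functions are a special case of non-negative monotone submodular functions (the approximation function $\tilde{f}(X) = f(\emptyset) + \sum_{x \in X} f(x \mid \emptyset)$ constructed in \Cref{alg:modular-approximation} is modular, hence both monotone and submodular, using $f(\emptyset) \ge 0$ and $f(x \mid \emptyset) \ge 0$ by monotonicity), these algorithms apply directly as subroutines.

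Next, I would invoke the preceding proposition with $\alpha = \frac{1}{p - 1 + 1/q} - \epsilon'$, yielding the approximation guarantee
\begin{equation*}
    f(X_{\textrm{MA}}) \ge \left( \frac{1}{p - 1 + 1/q} - \epsilon' \right) \frac{m_1 m_s}{M_1 M_s} f(X^*)
    = \left( \frac{1}{p - 1 + 1/q} \frac{m_1 m_s}{M_1 M_s} - \epsilon' \frac{m_1 m_s}{M_1 M_s} \right) f(X^*).
\end{equation*}
Finally, by choosing $\epsilon' = \epsilon \cdot \frac{M_1 M_s}{m_1 m_s}$ (which is positive since restricted strong concavity and smoothness constants are positive), the error term becomes exactly $\epsilon$ and the desired bound follows.

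I do not foresee any real obstacle here: the corollary is a direct composition of the prior proposition with a known linear-maximization result, and the only minor bookkeeping is the rescaling of $\epsilon'$ into the stated $\epsilon$. The substantive work was already done in proving the preceding proposition, which established that modular approximation preserves the subroutine's approximation ratio up to a multiplicative factor of $\frac{m_1 m_s}{M_1 M_s}$ governed by the restricted strong concavity/smoothness constants.
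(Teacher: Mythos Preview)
Your approach is correct and matches the paper's: the corollary is obtained by plugging the known local-search approximation ratio for linear maximization over $p$-matroid intersections \citep{LSV10} and $p$-exchange systems \citep{FNSW11} into the preceding proposition, with the $\epsilon$-rescaling you describe. One small correction: the ratio $\frac{1}{p-1+1/q}-\epsilon'$ in those works is specific to \emph{modular} (linear) objectives, not to general monotone submodular functions (for which the guarantee is only $\frac{1}{p}-\epsilon$); since $\tilde f$ is modular this is exactly what you need, but your justification ``linear is a special case of submodular'' would yield the weaker $1/p$ bound if taken literally.
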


\end{document}